\newcommand{\reffig}[1]{\figurename~\ref{fig:#1}}
\newcommand{\refequ}[1]{Equation~\eqref{eq:#1}}
\newcommand{\refsec}[1]{Section~\ref{sec:#1}}
\newcommand{\refdef}[1]{Definition~\eqref{def:#1}}
\newcommand{\refalg}[1]{Algorithm~\ref{alg:#1}}
\newcommand{\reftable}[1]{\tablename~\ref{table:#1}}
\newcommand{\Bezier}{B\'ezier\xspace}
\newtheorem{theorem}{Theorem}
\newtheorem{definition}{Definition}
\newcommand{\R}{\ensuremath{\mathbb{R}}}
\newcommand{\glow}{\sigma}
\newcommand{\gfront} {g_{In}}
\newcommand{\tildegfront} {{\tilde g}_{In}}
\newcommand{\gback} {g_{Out}}
\newcommand{\pfront} {p_{In}}
\newcommand{\tildepfront} {{\tilde p}_{In}}
\newcommand{\pback} {p_{Out}}
\newcommand{\Vpar}{V_\parallel}
\newcommand{\Vpara}{V_{\parallel,1}}
\newcommand{\Vparb}{V_{\parallel,2}}
\newcommand{\Vparc}{V_{\parallel,3}}
\newcommand{\eps}{\varepsilon}
\newcommand{\Jac}[1]{J_{#1}}
\global\def\Pt{-1}
\global\def\amp{0.75}
\global\def\shiftinvgamma{.2}
\global\def\freq{57./3.*2.*3.14159265359}
\def\phiOne(#1,#2){{#2+(\amp*sin(\freq*#1))}}
\def\phiTwo(#1,#2){{1.125+#2-(#1-1.5)*(#1-1.5)/2.}}
\def\phiInv(#1){{-.7+1.4*\amp*sin(\freq*#1)}}
\def\phiInvApprox(#1,#2,#3){{1.1-.7*((#1-#2+.01)/#3)^.2}}
\global\def\radius{0.1} % sphere radius
\global\def\translatePAR{6}
\global\def\translateGEO{0}
\global\def\translateRHO{1.5}
\newcommand{\drawRayCastingISOsetup}[0]{
    % draw screen
    \def\ang{335}
    \pgfmathsetmacro{\TAx}{{0}}
    \pgfmathsetmacro{\TAy}{{0-1.5}}
    \pgfmathsetmacro{\TAz}{{3}}
    \pgfmathsetmacro{\TBx}{{0}}
    \pgfmathsetmacro{\TBy}{{2-1.5}}
    \pgfmathsetmacro{\TBz}{{3}}

    \pgfmathsetmacro{\TCx}{{4}}
    \pgfmathsetmacro{\TCy}{{2-1.5}}
    \pgfmathsetmacro{\TCz}{{3}}
    \pgfmathsetmacro{\TDx}{{4}}
    \pgfmathsetmacro{\TDy}{{0-1.5}}
    \pgfmathsetmacro{\TDz}{{3}}

    \pgfmathsetmacro{\Ax}{{cos(\ang)*\TAx+sin(\ang)*\TAz}}
    \pgfmathsetmacro{\Ay}{\TAy}
    \pgfmathsetmacro{\Az}{{-sin(\ang)*\TAx+cos(\ang)*\TAz}}
    \pgfmathsetmacro{\Bx}{{cos(\ang)*\TBx+sin(\ang)*\TBz}}
    \pgfmathsetmacro{\By}{\TBy}
    \pgfmathsetmacro{\Bz}{{-sin(\ang)*\TBx+cos(\ang)*\TBz}}

    \pgfmathsetmacro{\Cx}{{cos(\ang)*\TCx+sin(\ang)*\TCz}}
    \pgfmathsetmacro{\Cy}{\TCy}
    \pgfmathsetmacro{\Cz}{{-sin(\ang)*\TCx+cos(\ang)*\TCz}}
    \pgfmathsetmacro{\Dx}{{cos(\ang)*\TDx+sin(\ang)*\TDz}}
    \pgfmathsetmacro{\Dy}{\TDy}
    \pgfmathsetmacro{\Dz}{{-sin(\ang)*\TDx+cos(\ang)*\TDz}}
    \pgfmathsetmacro{\halfx}{{\Ax+0.5*(\Cx-\Ax)}}
    \pgfmathsetmacro{\halfy}{{\Ay+0.5*(\Cy-\Ay)}}
    \pgfmathsetmacro{\halfz}{{\Az+0.5*(\Cz-\Az)}}

    \pgfmathsetmacro{\backx}{{.6}}
    \pgfmathsetmacro{\backy}{{1}}
    \pgfmathsetmacro{\backz}{{-2.4}}

}
\newcommand{\drawBadApprox}[0]{

    \def\pa{{\translatePAR+\Pt +\shiftinvgamma + .64},{ .4}}
    \def\pb{{\translatePAR+\Pt +\shiftinvgamma + .04},{+.09}}
    \def\pc{{\translatePAR+\Pt +\shiftinvgamma + .00},{-.5}}

    \filldraw[ball color=red] ({\pa}) circle (\radius);
    \filldraw[ball color=red] ({\pb}) node[right=8pt, below=1pt] {$\pmb{\widetilde p_i}$} circle (\radius);
    \filldraw[ball color=red] ({\pc}) circle (\radius);
    \draw[very thick,red,densely dashed] ({\pa}) -- ({\pb}) -- ({\pc});

    \def\tlin{0.75}
    \def\ga{{\ox+\tlin*(\lx-\ox)},{\oy+\tlin*(\ly-\oy)}}

    \def\tlout{0.9}
    \def\gc{{\ox+\tlout*(\lx-\ox)},{\oy+\tlout*(\ly-\oy)}}

    \def\tmpb{\oy+\tlout*(\ly-\oy)-.6}
    \def\gb{{\ox+\tlout*(\lx-\ox)+.0},{\tmpb}}

	\filldraw[ball color=red] ({\ga}) circle (\radius);
	\filldraw[ball color=red] ({\gb}) node[right=5pt,below=5pt] {$\pmb{\widetilde g_i}$} circle (\radius) ;
	\filldraw[ball color=red] ({\gc}) circle (\radius);
    \draw[very thick, shorten >=1pt, shorten <=1pt,-triangle 45] ({\pa}) to node[above] {$ \quad \quad \pmb{\phi}$} ({\gc});
    \draw[very thick, shorten >=1pt, shorten <=1pt,-triangle 45] ({\pb}) -- ({\gb});
    \draw[very thick, shorten >=1pt, shorten <=1pt,-triangle 45] ({\pc}) -- ({\ga});
}
\newcommand{\drawFrustum}[0]{
    \def\ax{1.45};\def\ay{1.82};
    \def\bx{0.7};\def\by{2.0};
    \def\cx{1.45};\def\cy{0.95};
    \def\dx{0.7};\def\dy{1.15};

    \def\ta{0.38};
    \def\tb{0.38};
    \def\tc{0.37};
    \def\td{0.43};
    \draw[thin,black,-] ({\ox+\ta*(\ax-\ox)},{\oy+\ta*(\ay-\oy)}) -- ({\ax},{\ay});
    \draw[thin,black,-] ({\ox+\tb*(\bx-\ox)},{\oy+\tb*(\by-\oy)}) -- ({\bx},{\by});
    \draw[thin,black,-] ({\ox+\tc*(\cx-\ox)},{\oy+\tc*(\cy-\oy)}) -- ({\cx},{\cy});
    \draw[thin,black,-] ({\ox+\td*(\dx-\ox)},{\oy+\td*(\dy-\oy)}) -- ({\dx},{\dy});

    \def\tdo{0.361};
    \draw[thin,gray,-] ({\ox+\tdo*(\dx-\ox)},{\oy+\tdo*(\dy-\oy)})-- ({\ox+\td*(\dx-\ox)},{\oy+\td*(\dy-\oy)});
    \draw[cube,black,thin,fill=blue,opacity=.2] ({\ax},{\ay}) -- ({\bx},{\by}) -- ({\dx},{\dy}) -- ({\cx},{\cy}) -- cycle;

    \def\tab{0.78};
    \def\tbb{0.78};
    \def\tcb{0.78};
    \def\tdb{0.78};
    \draw[cube,black,thin,fill=blue,opacity=.2] ({\ox+\tab*(\ax-\ox)},{\oy+\tab*(\ay-\oy)}) --
    ({\ox+\tbb*(\bx-\ox)},{\oy+\tbb*(\by-\oy)}) --
    ({\ox+\tdb*(\dx-\ox)},{\oy+\tdb*(\dy-\oy)}) --
    ({\ox+\tcb*(\cx-\ox)},{\oy+\tcb*(\cy-\oy)}) --
      cycle;

    \def\tab{\ta};
    \def\tbb{\tb};
    \def\tcb{\tc};
    \def\tdb{\td};
    \draw[cube,black,thin,fill=blue,opacity=.2] ({\ox+\tab*(\ax-\ox)},{\oy+\tab*(\ay-\oy)}) --
    ({\ox+\tbb*(\bx-\ox)},{\oy+\tbb*(\by-\oy)}) --
    ({\bx},{\by}) --
    ({\ax},{\ay}) --
      cycle;

    \draw[cube,black,thin,fill=blue,opacity=.2] ({\ox+\tab*(\ax-\ox)},{\oy+\tab*(\ay-\oy)}) --
    ({\ox+\tcb*(\cx-\ox)},{\oy+\tcb*(\cy-\oy)}) --
    ({\cx},{\cy}) --
    ({\ax},{\ay}) --
      cycle;

}
\newcommand{\drawRayCastingISOrayingeoSurface}[0]{
    \def\tlin{0.75}
	\filldraw[ball color=blue] ({\ox+\tlin*(\lx-\ox)},{\oy+\tlin*(\ly-\oy)}) circle (\radius);

    \def\tlout{0.9}
	\filldraw[ball color=blue] ({\ox+\tlout*(\lx-\ox)},{\oy+\tlout*(\ly-\oy)}) circle (\radius);
}
\newcommand{\drawEye}[3]{
    %%%%%%%%%%%%%%%%%%%%%%%%%%%%%%%%%
    % draw an eye
    %%%%%%%%%%%%%%%%%%%%%%%%%%%%%%%%%
    \pgfmathsetmacro{\ex}{#1}
    \pgfmathsetmacro{\ey}{#2}
    % IRIS
    \pgfmathsetmacro{\eyeSize}{.9}
    \pgfmathsetmacro{\eRot}{#3}
    \pgfmathsetmacro{\eAp}{-55}
    \draw[rotate around={\eRot:(\ex,\ey)}] (\ex,\ey) -- ++(-.5*\eAp:\eyeSize)
                                           (\ex,\ey) -- ++(.5*\eAp:\eyeSize);
    \draw (\ex,\ey) ++(\eRot+\eAp:.75*\eyeSize) arc (\eRot+\eAp:\eRot-\eAp:.75*\eyeSize);

    % IRIS
    \draw[fill=gray] (\ex,\ey) ++(\eRot+\eAp/3:.75*\eyeSize) % start point
                 arc (\eRot+180-\eAp:\eRot+180+\eAp:.28*\eyeSize);

    %PUPIL, a filled arc
    \draw[fill=black] (\ex,\ey) ++(\eRot+\eAp/3:.75*\eyeSize) % start point
                  arc (\eRot+\eAp/3:\eRot-\eAp/3:.75*\eyeSize);
}
\newcommand{\drawRayCastingISOscreen}[0]{

    \def\ox{-1.75};\def\oy{-3.9};

    %%$ draw light ray
    \def\tl{0.1}
    \def\lx{1.35};\def\ly{1.8};
    \draw[thick,blue,densely dashed,-triangle 45] ({\ox+\tl*(\lx-\ox)},{\oy+\tl*(\ly-\oy)}) -- ({\lx},{\ly});
    %%% draw circle for in and out
    \def\tlin{0.75}
    \def\tlout{0.9}
    \draw[thick,blue,-] ({\ox+\tlin*(\lx-\ox)},{\oy+\tlin*(\ly-\oy)}) -- ({\ox+\tlout*(\lx-\ox)},{\oy+\tlout*(\ly-\oy)});
    \drawEye{\translateGEO-1.75}{-3.9}{60}

%%%%%%%%%%%%%%%%
%%%% color pixels hit by rays
%%%%%%%%%%%%%%%%

    \pgfmathsetmacro{\xy}{{5}}
    \pgfmathsetmacro{\xx}{{2}}
    \pgfmathsetmacro{\Ahalf}{{\Ay+\xx/5*(\By-\Ay)}}
    \pgfmathsetmacro{\Ahalfp}{{\Ay+(\xx+1)/5*(\By-\Ay)}}
    \pgfmathsetmacro{\Dhalf}{{\Dy+\xx/5*(\Cy-\Dy)}}
    \pgfmathsetmacro{\Dhalfp}{{\Ay+(\xx+1)/5*(\By-\Ay)}}

    \pgfmathsetmacro{\Axhalf}{{\Ax+\xy/10*(\Dx-\Ax)}}
    \pgfmathsetmacro{\Ayhalf}{{\Ahalf+\xy/10*(\Dhalf-\Ahalf)}}
    \pgfmathsetmacro{\Azhalf}{{\Az+\xy/10*(\Dz-\Az)}}

    \pgfmathsetmacro{\Axhalfp}{{\Ax+\xy/10*(\Dx-\Ax)}}
    \pgfmathsetmacro{\Ayhalfp}{{\Ahalfp+\xy/10*(\Dhalfp-\Ahalfp)}}
    \pgfmathsetmacro{\Azhalfp}{{\Az+\xy/10*(\Dz-\Az)}}

    \pgfmathsetmacro{\Dxhalf}{{\Ax+(\xy+1)/10*(\Dx-\Ax)}}
    \pgfmathsetmacro{\Dyhalf}{{\Ahalf+(\xy+1)/10*(\Dhalf-\Ahalf)}}
    \pgfmathsetmacro{\Dzhalf}{{\Az+(\xy+1)/10*(\Dz-\Az)}}

    \pgfmathsetmacro{\Dxhalfp}{{\Ax+(\xy+1)/10*(\Dx-\Ax)}}
    \pgfmathsetmacro{\Dyhalfp}{{\Ahalfp+(\xy+1)/10*(\Dhalfp-\Ahalfp)}}
    \pgfmathsetmacro{\Dzhalfp}{{\Az+(\xy+1)/10*(\Dz-\Az)}}
    \draw[cube,black,thin,fill=blue,opacity=.4] ({\translateGEO+\Axhalf},{\Ayhalf},{\Azhalf}) -- ({\translateGEO+\Axhalfp},{\Ayhalfp},{\Azhalfp}) -- ({\translateGEO+\Dxhalfp},{\Dyhalfp},{\Dzhalfp}) -- ({\translateGEO+\Dxhalf},{\Dyhalf},{\Dzhalf}) -- cycle;

%%%%%%%%%%%%%%%
%%% draw screen
%%%%%%%%%%%%%%%

    \def\Lb{1}% begin the index at
    \foreach \x in {1,2,3,4} {
        \pgfmathsetmacro{\Ahalf}{{\Ay+\x/5*(\By-\Ay)}}
        \pgfmathsetmacro{\Dhalf}{{\Dy+\x/5*(\Cy-\Dy)}}
        \draw[cube,black,thin,opacity=.5] ({\translateGEO+\Ax+\Lb/10*(\Dx-\Ax)},{\Ahalf+\Lb/10*(\Dy-\Ay)},{\Az+\Lb/10*(\Dz-\Az)}) -- ({\translateGEO+\Dx},{\Dhalf},{\Dz});
    }
    \foreach \x in {2,3,4,5,6,7,8,9} {
        \pgfmathsetmacro{\Axhalf}{{\Ax+\x/10*(\Dx-\Ax)}}
        \pgfmathsetmacro{\Ayhalf}{{\Ay+\x/10*(\Dy-\Ay)}}
        \pgfmathsetmacro{\Azhalf}{{\Az+\x/10*(\Dz-\Az)}}
        \pgfmathsetmacro{\Bxhalf}{{\Bx+\x/10*(\Cx-\Bx)}}
        \pgfmathsetmacro{\Byhalf}{{\By+\x/10*(\Cy-\By)}}
        \pgfmathsetmacro{\Bzhalf}{{\Bz+\x/10*(\Cz-\Bz)}}
        \draw[cube,black,thin,opacity=.5] ({\translateGEO+\Axhalf},{\Ayhalf},{\Azhalf}) -- ({\translateGEO+\Bxhalf},{\Byhalf},{\Bzhalf});
    }
    \draw[cube,opacity=.5] ({\translateGEO+\Ax+\Lb/10*(\Dx-\Ax)},{\Ay+\Lb/10*(\Dy-\Ay)},{\Az+\Lb/10*(\Dz-\Az)}) --
                           ({\translateGEO+\Bx+\Lb/10*(\Cx-\Bx)},{\By+\Lb/10*(\Cy-\By)},{\Bz+\Lb/10*(\Cz-\Bz)}) --
                           ({\translateGEO+\Cx},{\Cy},{\Cz}) --
                           ({\translateGEO+\Dx},{\Dy},{\Dz}) -- cycle;

}
\newcommand{\annotateInOut}[0]{
    \def\ox{-1.75};\def\oy{-3.9};
    \def\tlin{0.75}
    \node[right=10pt, below=0.1pt] (RayIn) at ({\ox+\tlin*(\lx-\ox)},{\oy+\tlin*(\ly-\oy)})  {$\gfront$};
    \def\tlout{0.9}
    \node[right=12pt, below=-4pt] (RayOut) at ({\ox+\tlout*(\lx-\ox)},{\oy+\tlout*(\ly-\oy)})  {$\gback$};

%%%    \def\timeIn{0.0}
%%%    \node[right=3pt] (RayIn) at ({\translatePAR+\Pt+\shiftinvgamma+ \geoXIn + \timeIn*(\geoXOut-\geoXIn)},{ \geoYIn + \timeIn*(\geoYOut-\geoYIn)},{ \geoZIn + \timeIn*(\geoZOut-\geoZIn)})  {$\pfront$};
%%%    \def\timeOut{.85}
%%%    \node[right=5pt, above=2pt] (RayIn) at ({\translatePAR+\Pt +\shiftinvgamma+ \geoXIn + \timeOut*(\geoXOut-\geoXIn)},{ \geoYIn + \timeOut*(\geoYOut-\geoYIn)},{ \geoZIn + \timeOut*(\geoZOut-\geoZIn)})  {$\pback$};
}
\newcommand{\drawRayCastingISOrayinparamSurface}[0]{

%%%%%%%%%%%%%%%
%%% ray in parameter space
%%%%%%%%%%%%%%%

    \def\geoXIn{0.75}
    \def\geoYIn{.25}
    \def\geoZIn{2}

    \def\geoXOut{.75}
    \def\geoYOut{0.5}
    \def\geoZOut{0}

}
\newcommand{\drawRayCastingISOrayingeoVolumeDepth}[0]{
    %%$ draw light ray
    \def\tl{0.1}
    \def\ox{-1.75};\def\oy{1.9};
    \def\lx{3.5};\def\ly{.9};
    \def\tr{1.6}
    \draw[thick,blue,->] ({\ox+\tl*(\lx-\ox)},{\oy+\tl*(\ly-\oy)}) -- ({\ox+\tr*(\lx-\ox)},{\oy+\tr*(\ly-\oy)});

    \def\tlA{0.25}
    \def\tlB{0.5}
    \draw[very thick,blue,-] ({\ox+\tlA*(\lx-\ox)},{\oy+\tlA*(\ly-\oy)}) -- ({\ox+\tlB*(\lx-\ox)},{\oy+\tlB*(\ly-\oy)});
    \def\tlA{0.75}
    \def\tlB{0.8}
    \draw[very thick,blue,-] ({\ox+\tlA*(\lx-\ox)},{\oy+\tlA*(\ly-\oy)}) -- ({\ox+\tlB*(\lx-\ox)},{\oy+\tlB*(\ly-\oy)});
    %%% draw circle for in and out
    \foreach \tlin in {0.25,0.5,0.75,0.8,1.35} {
        \node[right] (RayIn) at ({\ox+\tlin*(\lx-\ox)},{\oy+\tlin*(\ly-\oy)})  {};
        \filldraw[ball color=blue] ({\ox+\tlin*(\lx-\ox)},{\oy+\tlin*(\ly-\oy)}) circle (\radius);
    }

    \drawEye{\ox}{\oy}{-10}

}
\tikzset{
    shaded/.style={
        postaction={
            decorate,
            decoration={
                markings,
                mark=between positions 0 and \pgfdecoratedpathlength step 0.5pt with {
                    \pgfmathsetmacro\myval{multiply(divide(
                    \pgfkeysvalueof{/pgf/decoration/mark info/distance from start}, \pgfdecoratedpathlength),120)};
                    \pgfsetfillcolor{green!\myval!red};
                    \pgfpathcircle{\pgfpointorigin}{#1};
                    \pgfusepath{fill};
                }
            }
        }
    }
}
\newcommand{\drawRayCastingISOrayinparamVolume}[0]{

    \def\a{0.1}
    \def\b{0.77}
    \def\sfs{-.1}
    \def\y{+.1}
    \draw [thick,blue] ({\translatePAR+\Pt+\shiftinvgamma+ \a-.13+\sfs},-0.5+\y) .. controls (\translatePAR+\Pt+\shiftinvgamma+\sfs-.3,0.4+\y) .. ({\translatePAR+\Pt+\shiftinvgamma+ \b-.13+\sfs},.35+\y);

    \node[right=3pt] (RayIn) at ({\translatePAR+\Pt+\shiftinvgamma+ \a-.13+\sfs},-0.5+\y) {$\pfront$};
    \node[right=10pt, above=2pt] (RayOut) at ({\translatePAR+\Pt+\shiftinvgamma+ \b-.13+\sfs},.35+\y) {$\pback$};

    \filldraw[ball color=blue] ({\translatePAR+\Pt+\shiftinvgamma+ \a-.13+\sfs},-0.5+\y) circle (\radius);

    \filldraw[ball color=blue] ({\translatePAR+\Pt+\shiftinvgamma+ \b-.13+\sfs},.35+\y) circle (\radius);

}
\def\phiInvDepth(#1,#2,#3){{(#3+#2*#1*#1}}
\newcommand{\drawRayCastingISOrayinparamVolumeDepthOne}[1]{

    \draw[thick,blue,domain=-.39:.39] plot[smooth] ({\translatePAR+\Pt+ \x+.2+2},{\phiInvDepth(\x,4,1+#1)});
    \draw[thick,blue,domain=-.39:.0] plot[smooth] ({\translatePAR+\Pt+ \x+1.6+2},{\phiInvDepth(\x,2,1.3+#1)});
    \filldraw[ball color=blue] ({\translatePAR+\Pt -.39+.2+2},{\phiInvDepth(-.39,4,1+#1)}) circle (\radius);
    \filldraw[ball color=blue] ({\translatePAR+\Pt+ .39+.2+2},{\phiInvDepth(+.39,4,1+#1)}) circle (\radius);
    \filldraw[ball color=blue] ({\translatePAR+\Pt- .39+1.6+2},{\phiInvDepth(-.39,2,1.3+#1)}) circle (\radius);
    \filldraw[ball color=blue] ({\translatePAR+\Pt+ 0. +1.6+2},{\phiInvDepth(0,2,1.3+#1)}) circle (\radius);

}
\newcommand{\drawRayCastingISOrayinparamVolumeDepthTwo}[1]{

    \draw[thick,blue,domain=-.6:1.3] plot[smooth] ({\translatePAR+\Pt+ \x+.2+2},{(\x-.5)*(\x-.5)*.5+.5+#1)});
    \filldraw[ball color=blue] ({\translatePAR+\Pt -.39+2},{\phiInvDepth(-.39,4,1+#1-.5)}) circle (\radius);
    \filldraw[ball color=blue] ({\translatePAR+\Pt+ 0. +1.6+2},{\phiInvDepth(0,2,1.3+#1-.42)}) circle (\radius);

}
\newcommand{\drawRayCastingMultiblock}[2]{

%%%%%%%%%%%%%%%
%%% geometry domain transformed cube
%%%%%%%%%%%%%%%
    \draw[thick,dashed,black] (\translateGEO+0,0,0) -- (\translateGEO+0,2,0) -- (\translateGEO+0,2,2);
    \draw[thick,black] (\translateGEO+0,0,0) -- (\translateGEO+0,0,2);
    \draw[thick,black] (\translateGEO+0,0,2) -- (\translateGEO+0,2,2);
    \draw[thick,dashed,black] (\translateGEO+2+1,0,2) -- (\translateGEO+2+1,2,2) -- (\translateGEO+2+1,2,0)-- (\translateGEO+2+1,0,0) -- cycle;

    \draw[thick,dashed,black,domain=0.:2.3] plot[smooth] (\translateGEO+\x,{\phiOne(\x,0)},0);
    \draw[thick,black,domain=2.2:3] plot[smooth] (\translateGEO+\x,{\phiOne(\x,0)},0);

    \draw[thick,black,domain=0.:3.] plot[smooth] (\translateGEO+\x,{\phiOne(\x,0)},2);

    \draw[thick,dashed,black,domain=2.3:3.] plot[smooth] (\translateGEO+\x,{\phiOne(\x,2)},0);
    \draw[thick,black,domain=0:2.2] plot[smooth] (\translateGEO+\x,{\phiOne(\x,2)},0);

    \draw[thick,black,domain=0.:3.] plot[smooth] (\translateGEO+\x,{\phiOne(\x,2)},2);

%%%%%%%%%%%%%%%
%%% annotation parameter to geometry domain
%%%%%%%%%%%%%%%
    \node (P1) at (\translatePAR+1+\Pt,0.5+#1) {$P_1$};
    \node (G1) at (\translateGEO+1.5,0.5+#1) {$G_1$};
    \draw[thick,-triangle 45] (P1) to node[above] {$\phi_1$} (G1);

%%%%%%%%%%%%%%%
%%% geometry domain transformed cube2
%%%%%%%%%%%%%%%
    \def\translateGEO{3}
    \draw[thick,black] (\translateGEO+0,0,2) -- (\translateGEO+0,2,2);
    \draw[thick,black] (\translateGEO+2+1,0,2) -- (\translateGEO+2+1,2,2) -- (\translateGEO+2+1,2,0)-- (\translateGEO+2+1,0,0) -- cycle;

    \draw[thick,dashed,black,domain=0.:2.3] plot[smooth] (\translateGEO+\x,{\phiTwo(\x,0)},0);
    \draw[thick,black,domain=2.2:3] plot[smooth] (\translateGEO+\x,{\phiTwo(\x,0)},0);

    \draw[thick,black,domain=0.:3.] plot[smooth] (\translateGEO+\x,{\phiTwo(\x,0)},2);

    \draw[thick,black,domain=0.1:3.] plot[smooth] (\translateGEO+\x,{\phiTwo(\x,2)},0);
    \draw[thick,dashed,black,domain=0:0.1] plot[smooth] (\translateGEO+\x,{\phiTwo(\x,2)},0);

    \draw[thick,black,domain=0.:3.] plot[smooth] (\translateGEO+\x,{\phiTwo(\x,2)},2);

%%%%%%%%%%%%%%%
%%% parameter domain cube
%%%%%%%%%%%%%%%

    \foreach \mv in {#1, #2} {
        \draw[cube hidden,black] (\translatePAR+2+\Pt+2,0+\mv,0) -- (\translatePAR+0+\Pt+2,0+\mv,0) -- (\translatePAR+0+\Pt+2,2+\mv,0);
        \draw[cube hidden,black] (\translatePAR+0+\Pt+2,0+\mv,0) -- (\translatePAR+0+\Pt+2,0+\mv,2);
        %draw the top of the cube
        \draw[cube,black] (\translatePAR+0+\Pt+2,0+\mv,2) -- (\translatePAR+0+\Pt+2,2+\mv,2) -- (\translatePAR+2+\Pt+2,2+\mv,2) -- (\translatePAR+2+\Pt+2,0+\mv,2) -- cycle;
        \draw[cube,black] (\translatePAR+2+\Pt+2,2+\mv,2) -- (\translatePAR+2+\Pt+2,0+\mv,2) -- (\translatePAR+2+\Pt+2,0+\mv,0) -- (\translatePAR+2+\Pt+2,2+\mv,0) -- cycle;
        \draw[cube,black] (\translatePAR+2+\Pt+2,2+\mv,2) -- (\translatePAR+0+\Pt+2,2+\mv,2) -- (\translatePAR+0+\Pt+2,2+\mv,0) -- (\translatePAR+2+\Pt+2,2+\mv,0) -- cycle;
    }

    \node (P2) at (\translatePAR+1+\Pt,0.5+#2-1.5) {$P_2$};
    \node (G2) at (\translateGEO+0.5,0.5+#2-1.5) {$G_2$};
    \draw[thick,-triangle 45] (P2) to node[above] {$\phi_2$} (G2);

}
\newcommand{\drawRayCastingISObase}[0]{

%%%%%%%%%%%%%%%
%%% geometry domain transformed cube
%%%%%%%%%%%%%%%
    \draw[thick,dashed,black] (\translateGEO+0,0,0) -- (\translateGEO+0,2,0) -- (\translateGEO+0,2,2);
    \draw[thick,black] (\translateGEO+0,0,0) -- (\translateGEO+0,0,2);
    \draw[thick,black] (\translateGEO+0,0,2) -- (\translateGEO+0,2,2);
    \draw[thick,black] (\translateGEO+2+1,0,2) -- (\translateGEO+2+1,2,2) -- (\translateGEO+2+1,2,0)-- (\translateGEO+2+1,0,0) -- cycle;

    \draw[thick,dashed,black,domain=0.:2.3] plot[smooth] (\translateGEO+\x,{\phiOne(\x,0)},0);
    \draw[thick,black,domain=2.2:3] plot[smooth] (\translateGEO+\x,{\phiOne(\x,0)},0);

    \draw[thick,black,domain=0.:3.] plot[smooth] (\translateGEO+\x,{\phiOne(\x,0)},2);

    \draw[thick,dashed,black,domain=2.3:3.] plot[smooth] (\translateGEO+\x,{\phiOne(\x,2)},0);
    \draw[thick,black,domain=0:2.2] plot[smooth] (\translateGEO+\x,{\phiOne(\x,2)},0);

    \draw[thick,black,domain=0.:3.] plot[smooth] (\translateGEO+\x,{\phiOne(\x,2)},2);

%%%%%%%%%%%%%%%
%%% parameter domain cube
%%%%%%%%%%%%%%%
    \draw[cube hidden,black] (\translatePAR+2+\Pt,0,0) -- (\translatePAR+0+\Pt,0,0) -- (\translatePAR+0+\Pt,2,0);
    \draw[cube hidden,black] (\translatePAR+0+\Pt,0,0) -- (\translatePAR+0+\Pt,0,2);
    %draw the top of the cube
    \draw[cube,black] (\translatePAR+0+\Pt,0,2) -- (\translatePAR+0+\Pt,2,2) -- (\translatePAR+2+\Pt,2,2) -- (\translatePAR+2+\Pt,0,2) -- cycle;
    \draw[cube,black] (\translatePAR+2+\Pt,2,2) -- (\translatePAR+2+\Pt,0,2) -- (\translatePAR+2+\Pt,0,0) -- (\translatePAR+2+\Pt,2,0) -- cycle;
    \draw[cube,black] (\translatePAR+2+\Pt,2,2) -- (\translatePAR+0+\Pt,2,2) -- (\translatePAR+0+\Pt,2,0) -- (\translatePAR+2+\Pt,2,0) -- cycle;

%%%%%%%%%%%%%%%
%%% annotation parameter to geometry domain
%%%%%%%%%%%%%%%
    \node (P) at (\translatePAR+1+\Pt,2.5) {P};

    \node (G) at (\translateGEO+1.5,2.5) {G};

}
\newcommand{\drawScalarField}[0]{
%%%%%%%%%%%%%%%
%%% scalar field
%%%%%%%%%%%%%%%
    \def\sfs{3}
    \draw[cube hidden,black] ({\translateRHO+\translatePAR+2+\Pt+\sfs},0,0) -- ({\translateRHO+\translatePAR+0+\Pt+\sfs},0,0) -- ({\translateRHO+\translatePAR+0+\Pt+\sfs},2,0);
    \draw[cube hidden,black] ({\translateRHO+\translatePAR+0+\Pt+\sfs},0,0) -- ({\translateRHO+\translatePAR+0+\Pt+\sfs},0,2);
    %draw the top of the cube
    \draw[cube,black] ({\translateRHO+\translatePAR+0+\Pt+\sfs},0,2) -- ({\translateRHO+\translatePAR+0+\Pt+\sfs},2,2) -- ({\translateRHO+\translatePAR+2+\Pt+\sfs},2,2) -- ({\translateRHO+\translatePAR+2+\Pt+\sfs},0,2) -- cycle;
    \draw[cube,black] ({\translateRHO+\translatePAR+2+\Pt+\sfs},2,2) -- ({\translateRHO+\translatePAR+2+\Pt+\sfs},0,2) -- ({\translateRHO+\translatePAR+2+\Pt+\sfs},0,0) -- ({\translateRHO+\translatePAR+2+\Pt+\sfs},2,0) -- cycle;
    \draw[cube,black] ({\translateRHO+\translatePAR+2+\Pt+\sfs},2,2) -- ({\translateRHO+\translatePAR+0+\Pt+\sfs},2,2) -- ({\translateRHO+\translatePAR+0+\Pt+\sfs},2,0) -- ({\translateRHO+\translatePAR+2+\Pt+\sfs},2,0) -- cycle;

    \def\a{0.1}
    \def\b{0.77}
    \def\sfs{3}
    \path [shaded=1.5pt] ({\translateRHO+\translatePAR+\Pt+\shiftinvgamma+ \a-.13+\sfs},-0.5) .. controls (\translateRHO+\translatePAR+\Pt+\shiftinvgamma+\sfs-.3,0.4) .. ({\translateRHO+\translatePAR+\Pt+\shiftinvgamma+ \b-.13+\sfs},0.35);

}
\global\def\dataf(#1){{(1.75+#1/5*sin(360/10*#1))}}
\global\def\datat(#1,#2){{#2+2.4*exp(-10*#1*#1)}}
\global\def\smin{2}
\global\def\snull{8}
\global\def\stwo{6}
\global\def\sj{4}
\global\def\sn{3}
\global\def\dist{2.5}
\global\def\distNormal{8}
\newcommand{\drawTransLR}[0]{
    \pgfmathsetmacro{\z}{{\dataf(5.4)}}
    \foreach \x in {\stwo,\sj} {
        \pgfmathsetmacro{\y}{{\dataf(\x)}}
        \pgfmathsetmacro{\yi}{\y-\z}
        \pgfmathsetmacro{\xt}{\datat({\yi},\snull+\dist)}
        \draw[black,-triangle 45] ({\x},{\y}) -- (\xt,{\y});
    }
}
\newcommand{\drawFunc}[0]{
    \draw[black,->] ({\smin},0) -- (9,0) node[black,at end,below] {$s$};
    \draw[blue,domain=\sn:\snull] plot (\x,{\dataf(\x)});
}
\newcommand{\drawTrans}[0]{
    \drawFunc
    \draw[black,->] ({0.5+\smin},-0.5) -- ({0.5+\smin},2.5) node[black,at end,left] {$\rho_\gamma$};
    \draw[blue,domain=\sn:\snull] plot (\x,{\dataf(\x)});
    \draw[gray,-] (\sj,0) -- (\sj,{\dataf(\sj)}) node[black,at start,below] {$s_j$};
    \draw[black,->] ({-0.5+\snull+\dist},0) -- ({3+\snull+\dist},0) node[black,at end,below] {$c(r)$};
    \draw[black,->] ({0+\snull+\dist},-0.5) -- ({0+\snull+\dist},3) node[black,at end,left] {$r$};
}
\newcommand{\drawSubsampleLinear}[0]{

    \pgfmathsetmacro{\xL}{{\stwo}}
    \pgfmathsetmacro{\xR}{{\sj}}
    \pgfmathsetmacro{\yL}{{\dataf(\xL)}}
    \pgfmathsetmacro{\yR}{{\dataf(\xR)}}
    \draw[very thick,black,densely dashed] ({\xL},{\yL}) -- (\xR,{\yR});
}
\newcommand{\drawSubsample}[0]{

    \drawSubsampleLinear

    \pgfmathsetmacro{\z}{{\dataf(5.4)}}

    \pgfmathsetmacro{\x}{{\xL+0.75*(\xR-\xL)}}
    \pgfmathsetmacro{\y}{{\yL+0.75*(\yR-\yL)}}
    \pgfmathsetmacro{\yi}{\y-\z}
    \pgfmathsetmacro{\xt}{\datat({\yi},\snull+\dist)}

    \pgfmathsetmacro{\ytL}{{\yR}}
    \draw[red,fill,opacity=.5] (\xt,{\ytL}) -- (\xt,{\y}) -- ({\snull+\dist},{\y}) -- ({\snull+\dist},{\ytL}) -- cycle;
    \pgfmathsetmacro{\ytL}{{\y}}

    \pgfmathsetmacro{\x}{{\xL+0.5*(\xR-\xL)}}
    \pgfmathsetmacro{\y}{{\yL+0.5*(\yR-\yL)}}
    \pgfmathsetmacro{\yi}{\y-\z}
    \pgfmathsetmacro{\xt}{\datat({\yi},\snull+\dist)}

    \draw[red,fill,opacity=.5] (\xt,{\ytL}) -- (\xt,{\y}) -- ({\snull+\dist},{\y}) -- ({\snull+\dist},{\ytL}) -- cycle;
    \pgfmathsetmacro{\ytL}{{\y}}

    \pgfmathsetmacro{\x}{{\xL+0.25*(\xR-\xL)}}
    \pgfmathsetmacro{\y}{{\yL+0.25*(\yR-\yL)}}
    \pgfmathsetmacro{\yi}{\y-\z}
    \pgfmathsetmacro{\xt}{\datat({\yi},\snull+\dist)}

    \draw[red,fill,opacity=.5] (\xt,{\ytL}) -- (\xt,{\y}) -- ({\snull+\dist},{\y}) -- ({\snull+\dist},{\ytL}) -- cycle;
    \pgfmathsetmacro{\ytL}{{\y}}

    \pgfmathsetmacro{\x}{{\xL}}
    \pgfmathsetmacro{\y}{{\yL}}
    \pgfmathsetmacro{\yi}{\y-\z}
    \pgfmathsetmacro{\xt}{\datat({\yi},\snull+\dist)}
    \draw[gray,-] (\x,0) -- (\x,{\y)}) node[black,at start,below] {$s_{j-1}$};

    \draw[red,fill,opacity=.5] (\xt,{\ytL}) -- (\xt,{\y}) -- ({\snull+\dist},{\y}) -- ({\snull+\dist},{\ytL}) -- cycle;
    \pgfmathsetmacro{\ytL}{{\y}}

    \pgfmathsetmacro{\x}{{\xL+0.75*(\xR-\xL)}}
    \pgfmathsetmacro{\y}{{\yL+0.75*(\yR-\yL)}}
    \pgfmathsetmacro{\yi}{\y-\z}
    \pgfmathsetmacro{\xt}{\datat({\yi},\snull+\dist)}
    \draw[black,-triangle 45] ({\x},{\y}) -- ({\xt},{\y});
    \pgfmathsetmacro{\x}{{\xL+0.5*(\xR-\xL)}}
    \pgfmathsetmacro{\y}{{\yL+0.5*(\yR-\yL)}}
    \pgfmathsetmacro{\yi}{\y-\z}
    \pgfmathsetmacro{\xt}{\datat({\yi},\snull+\dist)}
    \draw[black,-triangle 45] ({\x},{\y}) -- ({\xt},{\y});
    \pgfmathsetmacro{\x}{{\xL+0.25*(\xR-\xL)}}
    \pgfmathsetmacro{\y}{{\yL+0.25*(\yR-\yL)}}
    \pgfmathsetmacro{\yi}{\y-\z}
    \pgfmathsetmacro{\xt}{\datat({\yi},\snull+\dist)}
    \draw[black,-triangle 45] ({\x},{\y}) -- ({\xt},{\y});

}
\newcommand{\drawNormal}[0]{

    \pgfmathsetmacro{\z}{{\dataf(5.4)}}

    \pgfmathsetmacro{\yL}{{\dataf(\sj)}}
    \pgfmathsetmacro{\yR}{{\dataf(\stwo)}}
    \pgfmathsetmacro{\yi}{\yR-\z}
    \pgfmathsetmacro{\xtR}{\datat({\yi},\snull+\distNormal)}
    \draw[red,fill,opacity=.5] (\xtR,{\yL}) -- (\xtR,{\yR}) -- ({\snull+\distNormal},{\yR}) -- ({\snull+\distNormal},{\yL}) -- cycle;

    \foreach \x in {\stwo,\sj} {
        \pgfmathsetmacro{\y}{{\dataf(\x)}}
        \pgfmathsetmacro{\yi}{\y-\z}
        \pgfmathsetmacro{\xt}{\datat({\yi},\snull+\distNormal)}
        \draw[black,densely dashed,-triangle 45] ({\x},{\y}) -- (\xt,{\y});
    }

    \draw[black,->] ({-0.5+\snull+\distNormal},0) -- ({3+\snull+\distNormal},0) node[black,at end,below] {$c(r)$} node[black,left=20pt,below=10pt] {\small normal};
    \draw[black,->] ({0+\snull+\distNormal},-0.5) -- ({0+\snull+\distNormal},3) node[black,at end,left] {$r$};

    \draw[black,->] ({-0.5+\snull+\dist},0) -- ({3+\snull+\dist},0) node[black,left=20pt,below=10pt] {\small supersampling};
}
\newcommand{\drawProposedSurf}[0]{

    \def\scale{1.2}

    % voxelgrid
    \foreach \x in {0,1,2,3,4} {
        \draw[white] ({\x*\scale},{0}) -- ({\x*\scale},{4*\scale});
        \draw[white] ({0},{\x*\scale}) -- ({4*\scale},{\x*\scale});
    }

    % object
    \draw[very thick,black!30!green] plot [smooth cycle] coordinates {(1.2*\scale,2*\scale) (0.8*\scale,3.2*\scale) (2*\scale,2.8*\scale) (3.2*\scale,3.2*\scale) (2.8*\scale,2*\scale) (3.2*\scale,0.8*\scale) (2*\scale,1.2*\scale) (0.8*\scale,0.8*\scale) };
    \draw[fill=black!30!green,opacity=0.2] plot [smooth cycle] coordinates {(1.2*\scale,2*\scale) (0.8*\scale,3.2*\scale) (2*\scale,2.8*\scale) (3.2*\scale,3.2*\scale) (2.8*\scale,2*\scale) (3.2*\scale,0.8*\scale) (2*\scale,1.2*\scale) (0.8*\scale,0.8*\scale) };

    \draw[very thick,dashed, blue] (1.2*\scale,2*\scale) -- (0.8*\scale,3.2*\scale) -- (2*\scale,2.8*\scale) -- (3.2*\scale,3.2*\scale) -- (2.8*\scale,2*\scale) -- (3.2*\scale,0.8*\scale) -- (2*\scale,1.2*\scale) -- (0.8*\scale,0.8*\scale) -- (1.2*\scale,2*\scale);

}
\newcommand{\drawVoxSurf}[0]{

    \def\scale{1.2}
    % object
    \draw[very thick,black!30!green] plot [smooth cycle] coordinates {(1.2*\scale,2*\scale) (0.8*\scale,3.2*\scale) (2*\scale,2.8*\scale) (3.2*\scale,3.2*\scale) (2.8*\scale,2*\scale) (3.2*\scale,0.8*\scale) (2*\scale,1.2*\scale) (0.8*\scale,0.8*\scale) };
    \draw[fill=black!30!green,opacity=0.2] plot [smooth cycle] coordinates {(1.2*\scale,2*\scale) (0.8*\scale,3.2*\scale) (2*\scale,2.8*\scale) (3.2*\scale,3.2*\scale) (2.8*\scale,2*\scale) (3.2*\scale,0.8*\scale) (2*\scale,1.2*\scale) (0.8*\scale,0.8*\scale) };

    % voxelgrid
    \foreach \x in {0,1,2,3,4} {
        \draw[densely dotted] ({\x*\scale},{0}) -- ({\x*\scale},{4*\scale});
        \draw[densely dotted] ({0},{\x*\scale}) -- ({4*\scale},{\x*\scale});
    }
    \def\siz{.1}

    \foreach \x in {0.5,1.5,2.5,3.5} {
        \foreach \y in {0.5,1.5,2.5,3.5} {
            \filldraw[fill = blue, draw = black] ({(\x-\siz)*\scale},{(\y-\siz)*\scale}) rectangle ({(\x+\siz)*\scale},{(\y+\siz)*\scale});
        }
    }
    \filldraw[ball color = red] ({2*\scale},{1.5*\scale}) circle (\radius);

}
\begin{document}

\title{Interactive Isogeometric Volume Visualization with Pixel-Accurate Geometry}

\teaser{

    \includegraphics[height=1.4in]{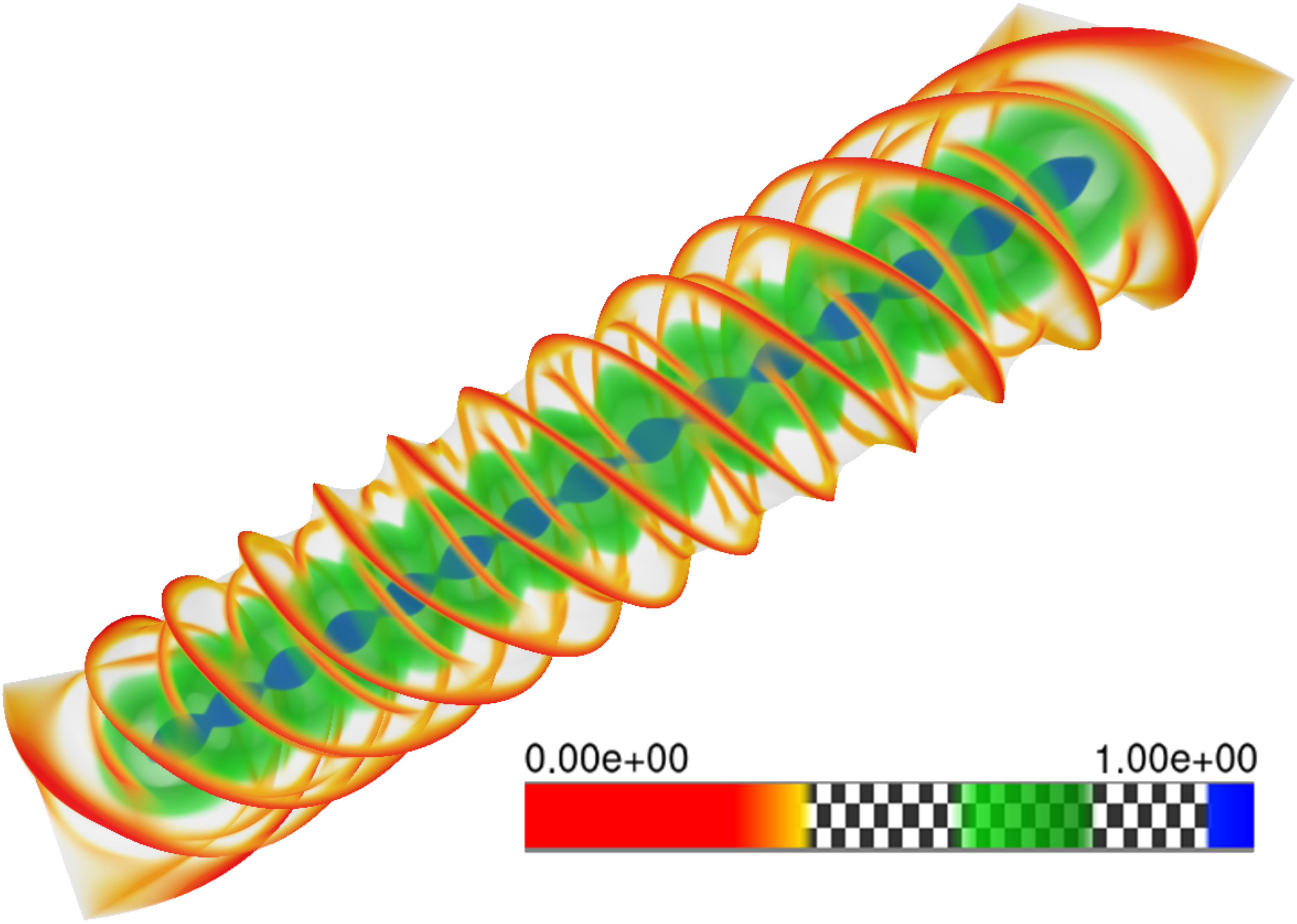}
    \includegraphics[height=1.3in]{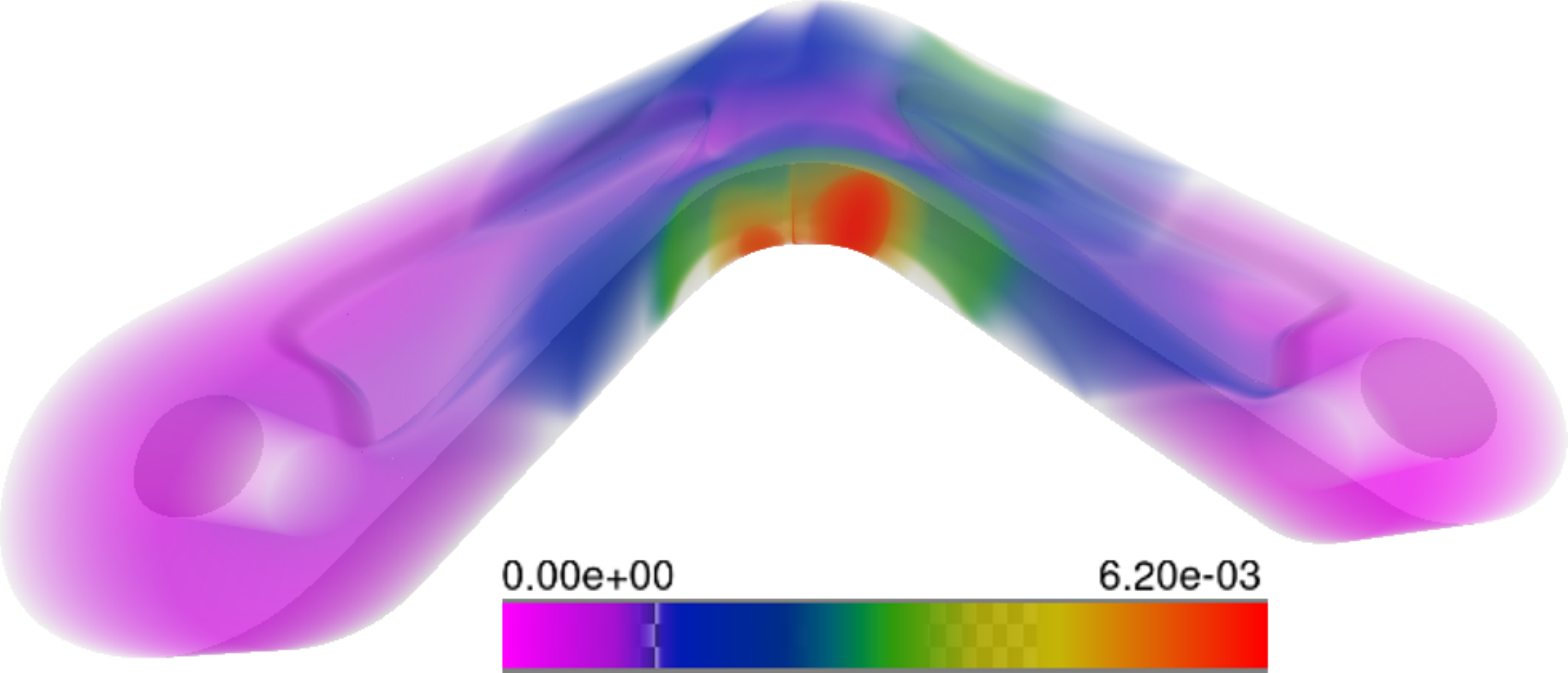}
    \includegraphics[height=1.3in]{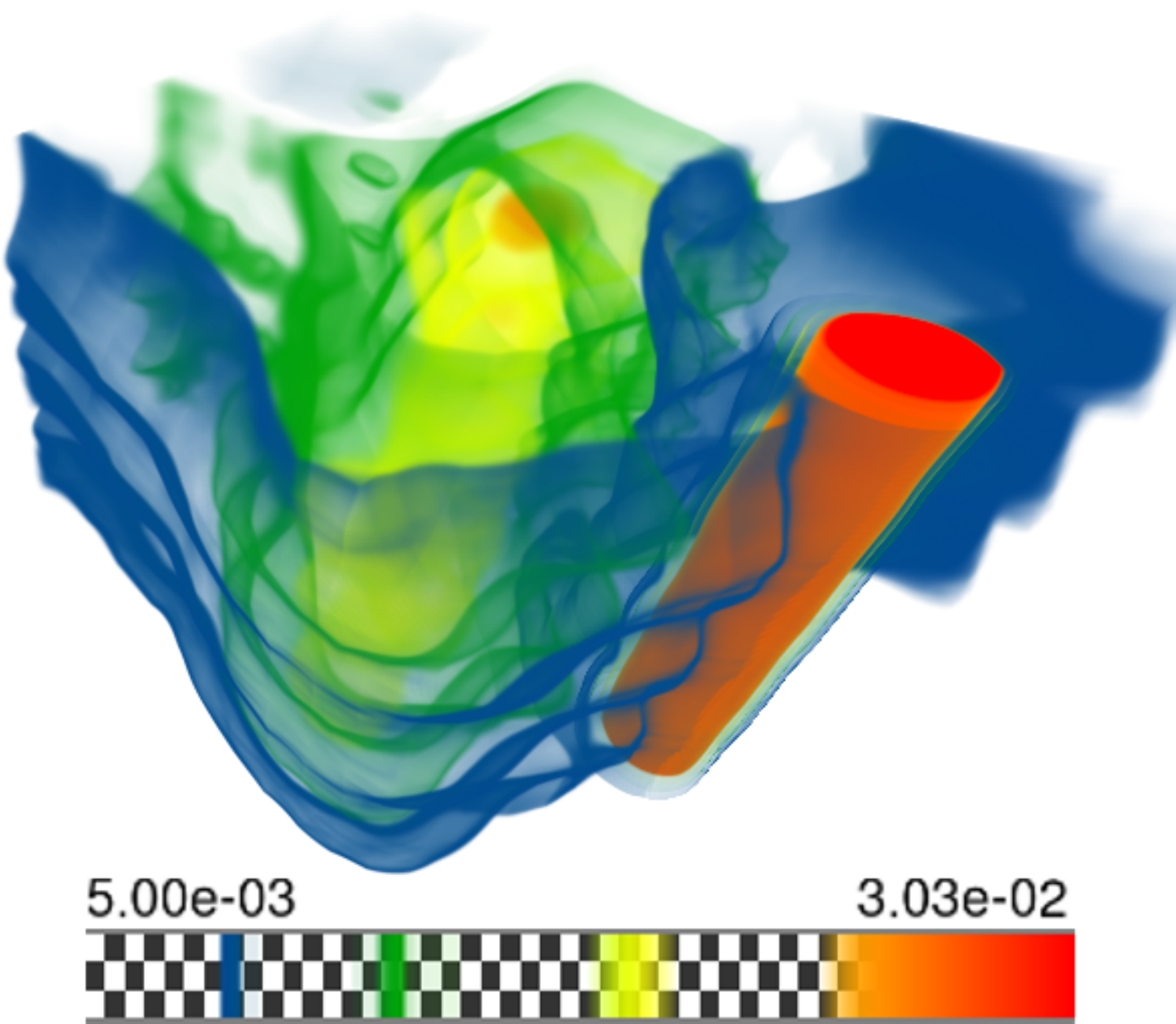}
    \captionof{figure}{Examples for isogeometric volume visualization in industry. Left to right: Twisted bar showing quality of parametrization; Industrial demonstrator model from TERRIFIC project showing von Mises stress of the bent model; Backstep flow from a computational fluid dynamics simulation showing turbulent viscosity.
        The bars show the colors that are assigned to the values of the scalar field, where a checkerboard pattern indicates transparent regions.
    }
    \protect \label{fig:teaser}
}

\author{
    Franz G. Fuchs,
    Jon M. Hjelmervik
    \IEEEcompsocitemizethanks{
    \IEEEcompsocthanksitem SINTEF ICT, Forskningsveien 1, N--0314 Oslo, Norway
    \protect\\
        E-mail: franzgeorgfuchs@gmail.com, jon.hjelmervik@sintef.no
    }% <-this % stops a space
    \thanks{}
}

\IEEEoverridecommandlockouts
\IEEEpubid{
    %\makebox[\columnwidth]{\hfill \hyperref[dx.doi.org/10.1109/TVCG.2015.2430337]{dx.doi.org/10.1109/TVCG.2015.2430337} \copyright 2015 IEEE. Personal use of this material is permitted. }
    %\hspace{\columnsep}
    \parbox{\textwidth}{\copyright 2015 IEEE, \url{http://dx.doi.org/10.1109/TVCG.2015.2430337}. Personal use of this material is permitted. Permission from IEEE must be obtained for all other users, including reprinting/ republishing this material for advertising or promotional purposes, creating new collective works for resale or redistribution to servers or lists, or reuse of any copyrighted components of this work in other works.}
}

% \markboth{
%     
%     }%
% {}
\IEEEpubidadjcol

\IEEEcompsoctitleabstractindextext{%
    \begin{abstract}
    A recent development, called isogeometric analysis, provides a unified approach for design, analysis and optimization of functional products in industry.
    Traditional volume rendering methods for inspecting the results from the numerical simulations cannot be applied directly to isogeometric models.
    We present a novel approach for interactive visualization of isogeometric analysis results, ensuring correct, i.e., pixel-accurate geometry of the volume including its bounding surfaces.
    The entire OpenGL pipeline is used in a multi-stage algorithm leveraging techniques from surface rendering, order-independent transparency, as well as theory and numerical methods for ordinary differential equations.
    We showcase the efficiency of our approach on different models relevant to industry, ranging from quality inspection of the parametrization of the geometry, to stress analysis in linear elasticity, to visualization of computational fluid dynamics results.
    \end{abstract}

    \begin{keywords}
    Volume visualization, Isogeometric analysis, Splines, Roots of Nonlinear Equations, Ordinary Differential Equations, GPU, Rendering
    \end{keywords}
}

\maketitle

\IEEEdisplaynotcompsoctitleabstractindextext

\IEEEpeerreviewmaketitle

\section{Introduction}\label{sec:introduction}

Classic volume rendering is a method to display a two-dimensional projection of a three-dimensional scalar field that is discretely sampled on a Cartesian grid.
In order to achieve this, a model for radiative transfer is used to describe absorption and emission of light along view-rays.
This article extends classic volume rendering to isogeometric volumes,
where both geometry and scalar field are given in terms of splines (NURBS, B-splines, etc.).
We present a novel method for direct, interactive rendering of isogeometric models.
The efficiency and applicability of the proposed isogeometric volume rendering method is showcased in three different application areas relevant to industry, see \reffig{teaser}.

These types of models stem from isogeometric analysis (IGA), a recent development proposed by Hughes et al.~\cite{cottrell2009isogeometric} for the analysis of physical phenomena governed by partial differential equations.
IGA provides the integration of design and analysis by using a common representation for computer aided design (CAD) and finite element methods (FEM).
This eliminates the conversion step between CAD and FEM, which is estimated to take up to 80\% of the overall analysis time for complex designs \cite{cottrell2009isogeometric}.

The pipeline for design, analysis and optimization of functional products is depicted in \reffig{flow}.
Visualization is used in all the stages; for quality inspection of the geometry, for studying the results of the numerical analysis, and for marketing purposes.
It is therefore increasingly important to offer visualization techniques that are reliable, informative and visually pleasing.
A main advantage of IGA is that it enables the direct feedback from numerical analysis results to the CAD model.
However, for that process to work efficiently, it is essential to be able to interactively inspect the results from the numerical analysis stage.

The geometry of an isogeometric volume is given by a spline $\phi$, mapping each point of the parameter domain $P\subset\R^3$ to a point in the geometry domain $G\subset\R^3$, see \reffig{raycasting}.
In addition, a second spline $\rho$ %:P\rightarrow \R^n$
is defined on the parameter domain $P$, describing a physical value such as density, displacement, temperature.
The spline $\rho$ can be scalar- or vector-valued and comes from a numerical simulation of a physical phenomenon.

In this paper we restrict our attention to B-splines. In three dimensions a B-spline of degree $p$ has the form
\begin{equation}\label{eq:Bspline}
    S(u,v,w) = \sum_{i=1}^l K_i^p(u) \sum_{j=1}^m L_j^p(v)\sum_{k=1}^n M_k^p(w) C_{i,j,k},
\end{equation}
where $C_{i,j,k}\in\R^d$ are the control points defined over a set of non-decreasing knot vectors $U=\{u_1,...,u_{l+p+1}\}$, $V=\{v_1,...,v_{m+p+1}\}$ and $W=\{w_1,...,w_{n+p+1}\}$.
With $K_i^p,L_i^p,M_i^p$ we denote the recursively defined $i$-th B-spline of degree $p$ in the corresponding direction.
For a scalar spline $\rho(u,v,w)$ the control points $C_{i,j,k}$ are scalar valued, and for the spline $\phi(u,v,w)$ describing the geometry the control points have values in $\R^3$.

\begin{figure}
\begin{tikzpicture}[scale=0.7]
        \path (0.5,8.7) node(b) {\includegraphics[width=0.30\linewidth]{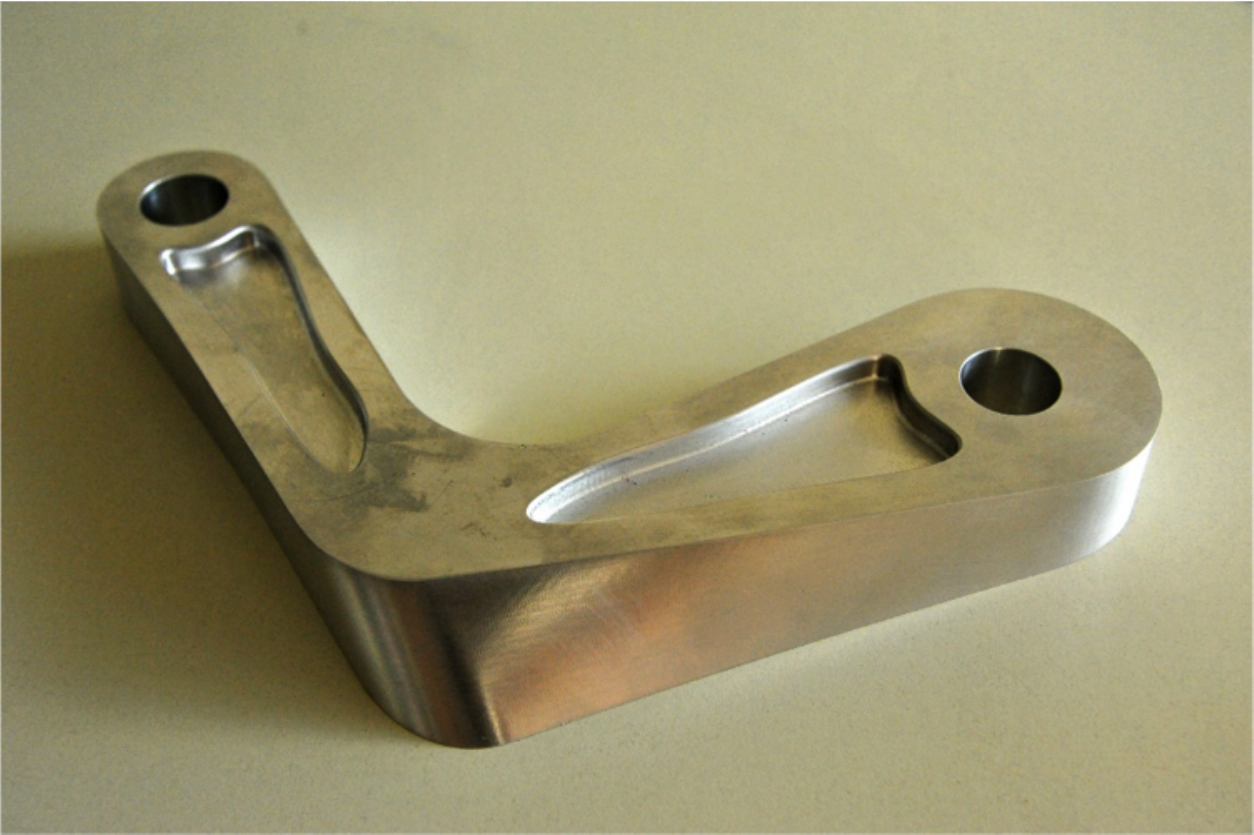}};
        \path (5,7.5) node(c) {\includegraphics[width=0.30\linewidth]{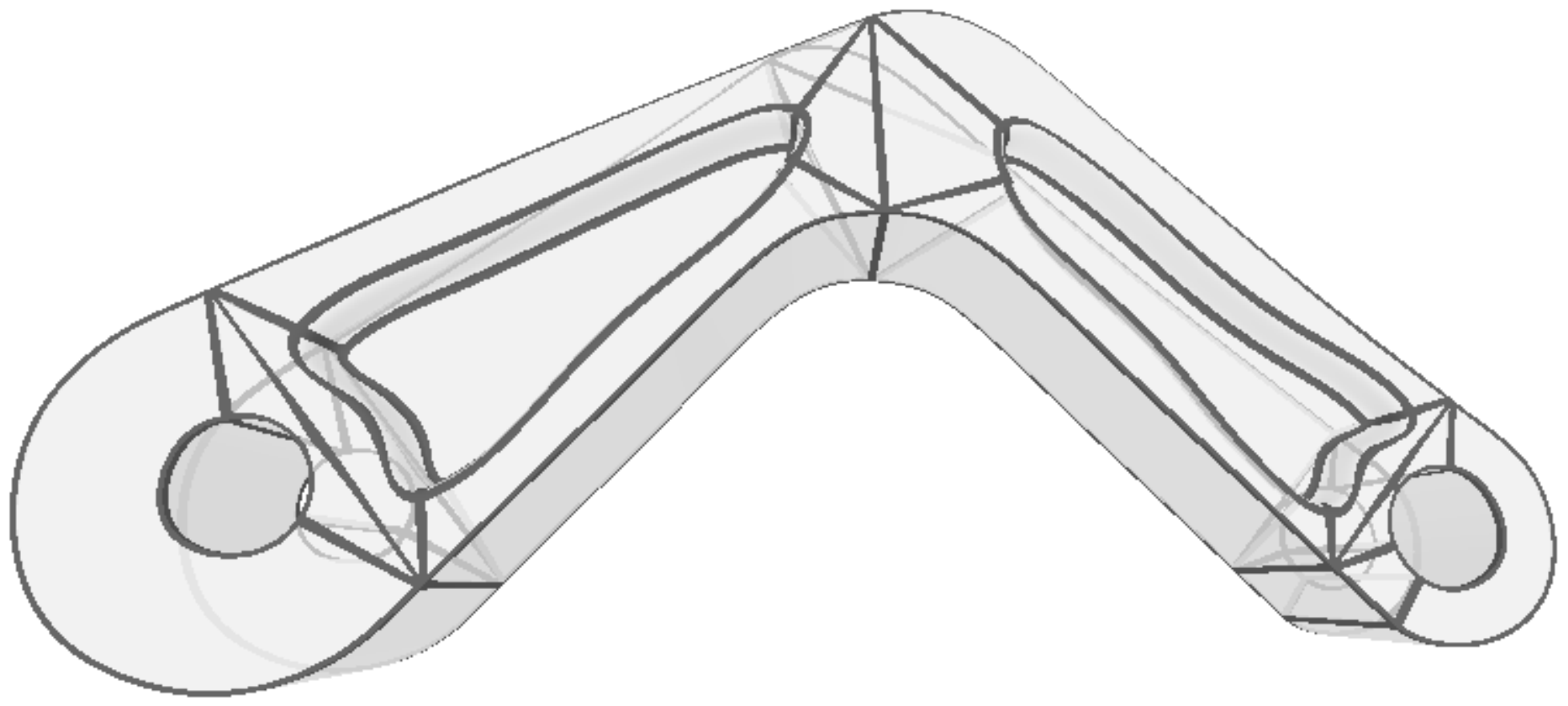}};
        \path (8.5,3) node(d) {\parbox[c][4\baselineskip][c]{2.6cm}{
        $F(x_1,..,x_n,$\\
        $\phantom{F(}\frac{\partial u}{\partial x_1},..,\frac{\partial u}{\partial x_n},$\\
        $\phantom{F(}\frac{\partial^2 u}{\partial x_1 \partial x_1},..) = 0$}};
        \path (1.5,3) node(e) {\includegraphics[width=0.50\linewidth]{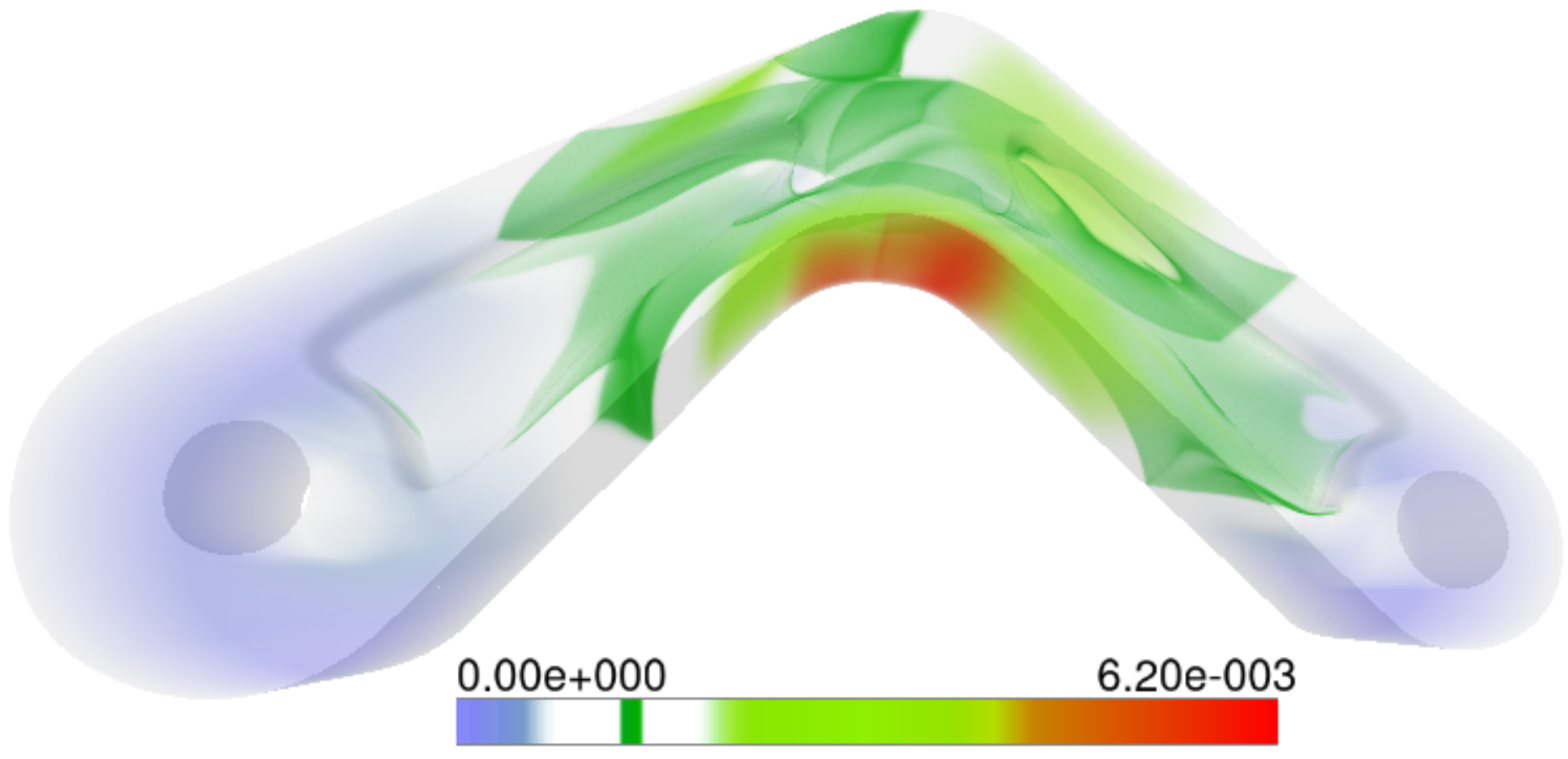}};
        \draw[very thick, bend left=45, <->] (b) to node [midway,above, rotate=-35]{model}(c) ++ (b) to node [midway,below, rotate=-35]{produce}(c);
        \draw[very thick, bend left=45, ->] (c) to node [midway,above, rotate=-60]{simulate} (d);
        \draw[very thick, bend left=45, ->] (d) to node [midway, below]{visualize/analyze} (e);
        \draw[very thick, bend left=40, ->] (e) to node [midway, above, rotate=55]{update/adjust} (c);
        \node (same) at (5,5) {same representation};
        \draw[very thick, -triangle 45] (same) -- (c);
        \draw[very thick, -triangle 45] (same) -- (d);
        \draw[very thick, shorten >=-20pt, -triangle 45] (e) -- (same);
\end{tikzpicture}    
    \caption{Information flow of an isogeometric object.}
    \protect \label{fig:flow}
\end{figure}

\section{Related Work}\label{sec:relatedwork}
Scientific volume visualization techniques convey information about a scalar field defined on a given geometry.
The techniques can be divided into the following approaches: simply rendering the bounding surfaces of the object; iso-surface extraction; and volume rendering.
The main challenges for achieving interactive volume visualization in the setting of isogeometric volumes are that an explicit expression of the inverse function of the geometry is not available in general, and that sampling is computationally expensive due to the need for spline evaluation, i.e., piecewise polynomial functions.

The first approach for volume visualization is to render the scalar field on the outer surfaces of the isogeometric volume only.
Although no information in the interior can be retrieved, this is a popular method due to its low computational effort.
Methods based on ray-casting parametric polynomial surfaces are a well-studied but challenging problem, see e.g. Kajiya~\cite{Kajiya:1982:RTP}.
B-spline surfaces can be rendered as piecewise algebraic surfaces, see for instance Loop and Blinn \cite{Loop:2006:RGR}.
But finding corresponding scalar field values becomes difficult, because algebraic surfaces are not parametrized.
An alternative to ray-casting surfaces is rasterization, in particular with the recent introduction of the tessellation shader stage in graphics processing units (GPUs).
A GPU-based two-pass algorithm for pixel-accurate rendering of B-spline surfaces was presented by Yeo et al. \cite{Yeo:2012}:
The first pass determines a sufficient tessellation level for each patch; during the second pass the surface is actually tessellated.
An alternative method which is also pixel-accurate, is presented by Hjelmervik~\cite{hjelmervik2012direct}, where bounds on the second order derivatives decide sufficient tessellation levels, without querying neighboring patches, allowing a single-pass algorithm.

\begin{figure}
\centering
    \begin{tikzpicture}[
	scale=0.72,
	grid/.style={very thin,gray},
	cube/.style={thick,line join=round},
	cube hidden/.style={thick,dashed,line join=round}]
    \global\def\translatePAR{5.2}
    \global\def\translateRHO{0.5}
	\drawRayCastingISOsetup
	\drawRayCastingISOscreen
    \node (pixel) at ({\translateGEO+\Axhalf-.1},{\Ayhalf},{\Azhalf-.2}) {};
    \node (pixeldescription) at ({\translateGEO+\Axhalf+2.5},{\Ayhalf},{\Azhalf-.2}) {pixel};
    \draw[very thick,-triangle 45] (pixeldescription) to node {} (pixel);
	\drawRayCastingISOrayingeoSurface
	\drawScalarField
	\drawRayCastingISOrayinparamVolume
	\drawRayCastingISOrayinparamSurface
	\drawRayCastingISObase
    \annotateInOut
	\node (S) at (\translateRHO+\translatePAR+3,2.5) {$\mathbb{R}$};
    \draw[thick,-triangle 45] (P) to node[above] {$\phi$} (G);
	\draw[thick,-triangle 45] (P) to node[above] {$\rho$} (S);
    \end{tikzpicture}
    \caption{Isogeometric volume rendering: $\phi$ describes the geometry, $\rho$ defines a scalar field, both defined on $P$.}
    \protect \label{fig:raycasting}
\end{figure}
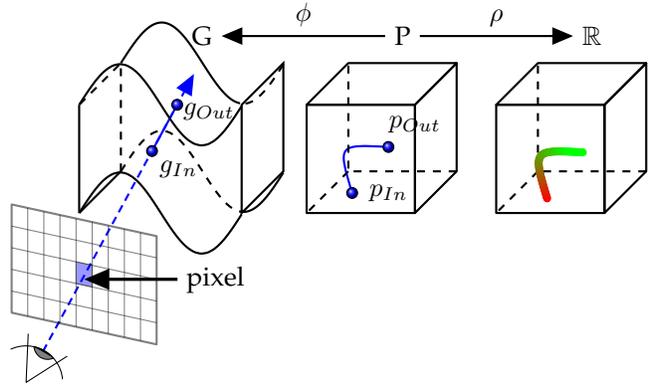

The second approach is to display iso-surfaces, i.e., surfaces where the scalar field has a particular value.
If the scalar field is sampled discretely over a regular grid, the marching cubes algorithm (see Lorensen and Klein~\cite{Lorensen:1987}) provides an efficient implementation; see e.g., Dyken et al.~\cite{Dyken:2008} for an implementation on the GPU.
However, for isogeometric models, where both the geometry and the scalar field are given by spline functions, the marching cubes algorithm cannot be applied directly.
Martin and Cohen~\cite{Martin_Cohen:2001:REVAUTS} provide an algorithm for iso-surface extraction in the setting of isogeometric volumes.
The method consists of iteratively dividing the model into a set of \Bezier volumes.
When those volumes are sufficiently simple, the iso-surfaces are given as the root of a function and the Newton-Raphson method is used to approximate the surfaces.
This framework has been realized as a CPU-based parallel implementation (see Martin et al.~\cite{Martin:2012}). The reported timings between 1 and 7~frames per second (FPS) on a cluster, make the approach unsuitable for interactive visualization purposes.

Recently, Schollmeyer and Fr\"{o}hlich\cite{schollmeyer2014direct} presented a GPU-based multi-pass visualization technique for direct iso-surface ray casting of NURBS-based isogeometric volumes.
The first pass generates a list of ray intervals which potentially contain intersections with the faces of each \Bezier cell.
After applying culling and depth-sorting, this list is used to generate ray-surface intersections in the second pass.
The ray-surface intersections are given by the roots of a system of nonlinear equations in the third pass.
An elaborate root isolating method is applied to find all ray-surface intersections.
A GPU-based implementation shows interactive volume visualization results of their method.

Another approach is to model the scalar field as a participating medium, where a modifiable transfer function specifies how field values are mapped to emitted color and transparency.
If the field consists of discrete samples over a regular grid, an abundance of results is available, see e.g. Levoy~\cite{Levoy:1988:DOS} for an early example or Engel at al.~\cite{engel2006real} for an overview.
In order to make use of existing standard volume rendering methods, one possibility is to precompute a "voxelized" version of the isogeometric model:
Taking the geometry into account, one can store the values of the scalar field in a texture (the algorithm proposed in this article can readily be used for that).
However, there are several draw-backs of this approach:
Firstly, as illustrated in \reffig{outerSurfaces}, it is difficult to represent the outer surfaces of the isogeometric volume using a voxel grid.
Isogeometric objects typically have a smooth outer surface, representing a sharp transition between where a scalar field is defined and the outside.
Standard volume rendering will therefore typically lead to spurious block-like structures.
Secondly, volume rendering of a voxelized model will (tri-) linearly interpolated the sample points.
A linear interpolation $p_1(x)$ of a function $f(x)$ between two sample points $a,b$ has the following (optimal) error bound
$|f(x)-p_1(x)| \leq \frac{(b-a)^2}{8} \underset{x\in[a,b]}{\max} \|f''(x)\|$, see e.g., \cite{hjelmervik2012direct}.
A similar bound holds in 2 and 3 dimensions.
In IGA the scalar field is given by $f(x,y,z)=\rho(\phi^{-1}(x,y,z))$, where $\rho$, and $\phi$ are spline functions (see \reffig{raycasting}).
This means that a point (such as the red point in \reffig{outerSurfaces}~(d)) that is (tri-) linearly interpolated between sample points can have an arbitrarily large approximation error, depending on the second order derivatives of $f(x,y,z)$.
We refer also to \reffig{deltaEscaled} showing how this affects render quality.
As a consequence, a voxelized version needs a potentially very high number of voxels in order to represent the scalar field accurately.
The high demand on GPU memory decreases the efficiency of standard volume rendering algorithms, see \refsec{applications}.
Of course, more sophisticated methods with adaptively chosen sample points could be used to represent the scalar field.
However, since a main reason for the introduction of IGA was to enable geometrically exact representation of geometry, this advantage should not be lost in the visualization stage, see \reffig{flow}.
Exact geometry is desired by designers and analyzers alike.

An algorithm for direct volume rendering for freeform volumes was presented by Chang et al.~\cite{Chang:1995:DRO}.
The method consists of subdividing B-spline volumes into a set of \Bezier volumes until the geometry of the volumes is monotone.
Those volumes are then depth-sorted, and scan-converted, allowing direct blending.
The algorithm reaches approximately 5~FPS on a mini-supercomputer with 8192 processors.
Martin and Cohen~\cite{Martin_Cohen:2001:REVAUTS} outline an algorithm based on finding the roots of a function with the Newton-Raphson method.
However, to the best of our knowledge, there is no implementation of this algorithm for isogeometric models.

\begin{figure}
    \centering
    \subfigure[Proposed Method]{\includegraphics[width=0.15\textwidth]{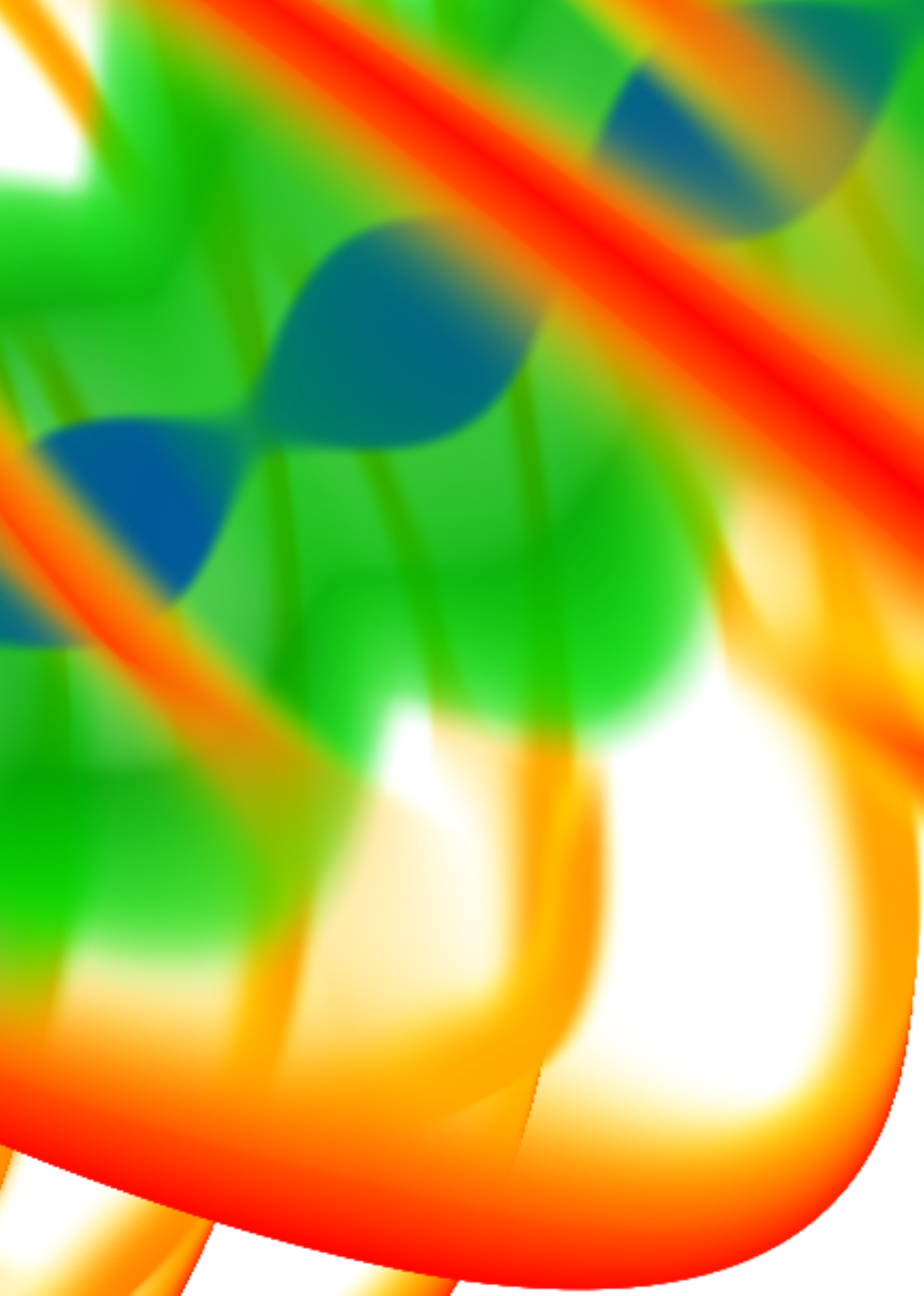}}
\hspace{1cm}
    \subfigure[Voxelized Method]{\includegraphics[width=0.15\textwidth]{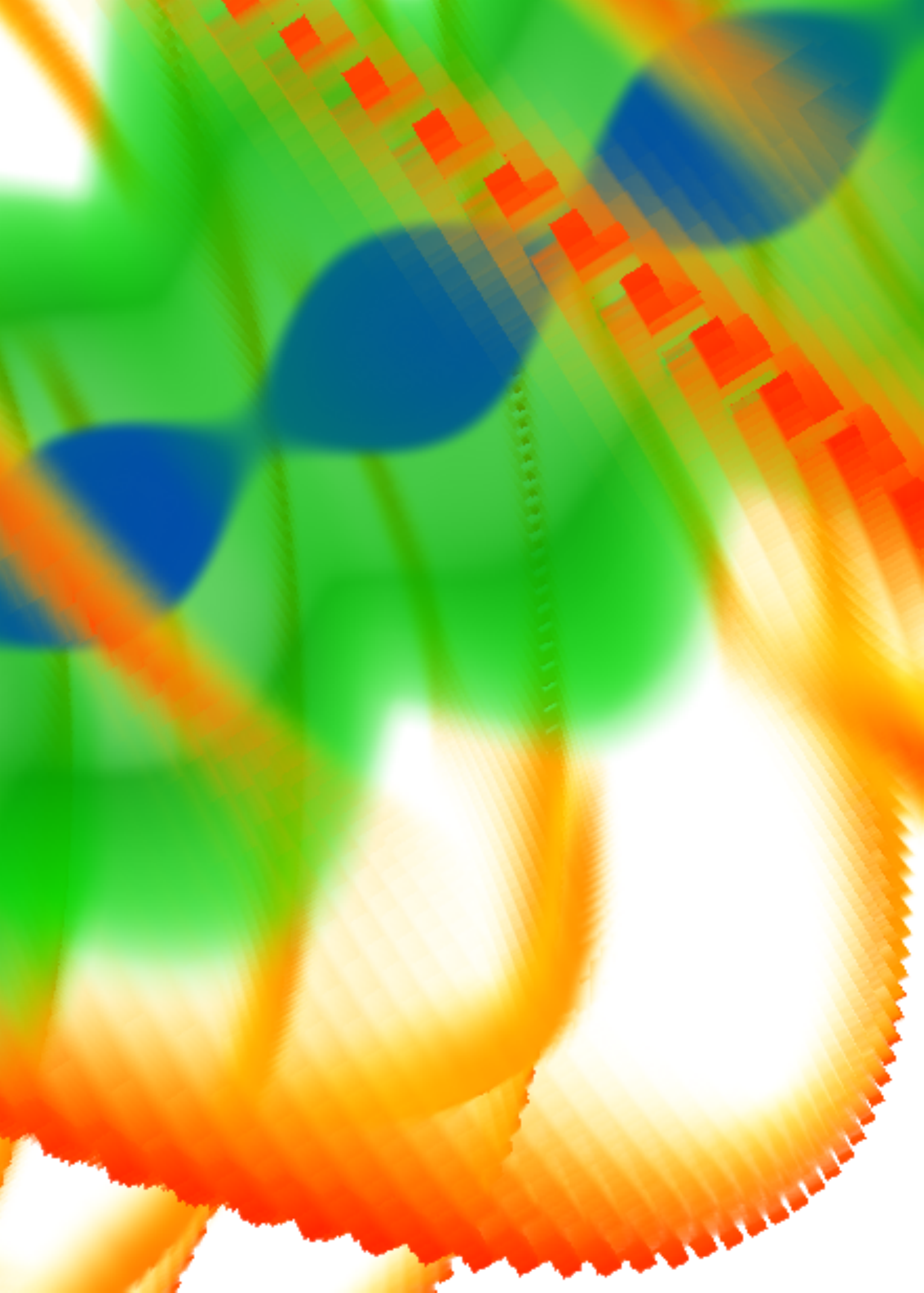}}
    \subfigure[
        Schematic sketch of boundary treatment for the proposed volume rendering method.
        Dashed blue line is pixel-accurate approximation.
    ]{
        \begin{tikzpicture}[
            scale=0.7,
            grid/.style={very thin,gray},
            cube/.style={thick,line join=round},
            cube hidden/.style={thick,dashed,line join=round}]
            \drawProposedSurf
        \end{tikzpicture}
    }
    \subfigure[
        It is difficult to exactly represent the surface of the green object with the grid values shown in blue (schematic representation).
    ]{
        \begin{tikzpicture}[
            scale=0.7,
            grid/.style={very thin,gray},
            cube/.style={thick,line join=round},
            cube hidden/.style={thick,dashed,line join=round}]
            \drawVoxSurf
        \end{tikzpicture}
    }
    \caption{
        The proposed method leads to pixel-accurate outer surfaces, while standard volume rendering of a voxelized version of the model is prone to show block-like structures at the outer surfaces.
        The red sample point in (d) will be (tri-) linearly interpolated.
    }
    \protect \label{fig:outerSurfaces}
\end{figure}

In \cite{Kurzion95spacedeformation} Kurzion and Yagel present a method for visualizing deformed two- and three-dimensional models.
Instead of deforming the objects themselves, the geometry is given by so-called deflectors that bend the rays used to render the scene.
However, in contrast to isogeometric volumes, the geometry is given explicitly. 

In this paper we present a flexible framework for volumetric visualization based on volume rendering, allowing visualization of scalar fields as well as derived properties such as parametrization quality or mechanical stress.
Our approach consists of several stages, leveraging the strengths of existing algorithms where possible.
Several features of our approach are novel:
\begin{enumerate}
    \item We provide an extremely robust and efficient algorithm for determining view-ray intersections with the surfaces of the volume, see \refsec{surfacerendering}.
          This is achieved by reformulating the original problem of finding zeros of a function, to be a problem related to approximation of surfaces.
      \item We devise novel approaches for pixel-accurate approximation of the preimage (i.e., the inverse) of the view-ray in the parameter space, suitable for efficient implementation on modern GPUs, see \refsec{geometry}.
          In addition to an algorithm based on approximating zeros of a function, we provide an alternative based on ordinary differential equations (ODEs).
    \item Degenerate cases of the parametrization of the geometry are treated in a suitable way, further increasing the robustness of our approach, see \refsec{cornercases}.
\end{enumerate}

We would like to point the reader to \cite{deuflhard2002scientific} for a good introduction to numerical methods for ODEs as well as \cite{chorin1990mathematical} for a mathematical introduction to fluid mechanics.

The rest of the paper is organized as follows: \refsec{volumerendering} provides the necessary background for volume rendering of isogeometric models, followed by a description of our approach in \refsec{approach}.
Then, novel algorithms enabling geometrically pixel-accurate sampling of the volume render integral~\eqref{eq:volRenderIntegral} are described in \refsec{geometry}.
Finally, we present applications and provide details of the performance of the implementation of the overall algorithm in \refsec{applications}, and a conclusion in \refsec{conclusion}.

\section{Volume Rendering for Isogeometric Models} \label{sec:volumerendering}

In this article we present an algorithm for volume rendering based on tracing view-rays through the volume from an imaginary observer.
If such a view-ray intersects the object one obtains the color for the pixel of the screen by evaluating an integral describing the accumulated radiance along the ray.
For a more detailed description of well established techniques for volume rendering see Engel et al.~\cite{engel2006real}, Jensen~\cite{Jensen:2001:RISUPM} and references therein.

\subsection{Continuous Model}
When taking both emission and absorption into account, the accumulated radiance $I_\lambda$ for wave length $\lambda$ along a view-ray $\gamma:\R\rightarrow\R^3$ is given by the so-called volume render integral
\begin{equation}
    \label{eq:volRenderIntegral}
    I_{\lambda}(t) = I_{\lambda}(0) T_{\lambda}(0,t) + \int_{ \left.\gamma\right|_{[0,t]} } \!\!\! \!\!\! \glow_{\lambda}(s) T_{\lambda}(s,t) ds,
\end{equation}
where $\int_\gamma$ denotes the line integral.
The function $\glow_\lambda(s)$ specifies emission, and $T_\lambda(s,t)$ specifies absorption (from $s$ to $t$) of light with the wave length $\lambda$.
In applications, one typically uses three groups of wave lengths representing red, green, and blue.
The emission and absorption, defined by a so-called transfer function, depend on the value of the scalar field $\rho$.

A major difference to classic volume rendering is that the geometry is no longer trivial.
For isogeometric models, both the spline $\phi(u,v,w)$ describing the geometry as well as the scalar field $\rho(u,v,w)$, are defined on the same parameter domain $P$, see \reffig{raycasting}.
This means that the value of the scalar field along the view-ray $\gamma(s)$ is given by
\begin{equation}
    \label{eq:rhoPhiInv}
    \rho_\gamma(s) \coloneqq \rho(\phi^{-1}(\gamma(s))).
\end{equation}
If not otherwise stated, $\phi$ is assumed to be bijective.
As mentioned before, both emission and absorption in the volume render integral~\eqref{eq:volRenderIntegral} are functions of the values of the scalar field, i.e.,
\begin{equation}
    \label{eq:emissionabsorption}
    \glow(s) = \glow(\rho_\gamma(s)), \quad T(s,t) = T(\rho_\gamma(s), \rho_\gamma(t)).
\end{equation}

As a consequence, if $\phi$ is not linear, the value of the scalar field along the straight view-ray in the geometry domain $G$, is obtained along a (not straight) curve in the parameter domain $P$.

\subsection{Numerical Approximation}\label{sec:numapprox}
In general, neither the inverse of the spline $\phi$ (see \refequ{rhoPhiInv}) nor the volume render integral~\eqref{eq:volRenderIntegral} itself have closed-form solutions.
Therefore, the solutions have to be approximated and the overall error will consist of different sources due to
\begin{itemize}
    \item numerical quadrature of integral~\eqref{eq:volRenderIntegral} (depending on number of sample points and their location), and
    \item numerical approximation of the preimage of these sample points along the view-ray (method depended).
\end{itemize}
For interactive applications, a compromise between performance and accuracy must be found.
The numerical quadrature of the volume render integral is described in \refsec{quadrature}.
The approximation of the inverse is specific to isogeometric models, and the main contribution of this paper is dedicated to it (sections~\ref{sec:approach} and \ref{sec:geometry}).
A definition of the requirement for the accuracy of the approximation of the inverse is given in \refsec{pixelaccuracy}.

\begin{figure}[t]
    \centering
    \begin{tikzpicture}[scale=0.45,line width=0.3mm]
        \drawTrans
        \draw[blue,domain=-1.5:1] plot[smooth] ({\datat(\x,\snull+\dist)},{\x+1.5});
        \drawSubsample
        \drawTransLR
        \drawNormal
        \draw[blue,domain=-1.5:1] plot[smooth] ({\datat(\x,\snull+\distNormal)},{\x+1.5});
    \end{tikzpicture}
    \caption{
        Supersampling: Dense sampling of high frequencies of the transfer function $c$ can improve the approximation (shown in red) of the volume render integral~\eqref{eq:volRenderIntegral} without further evaluations of $\rho_\gamma$.
        }
    \protect \label{fig:subsample}
\end{figure}

\subsubsection{Quadrature of volume render integral}\label{sec:quadrature}
Discretizations of \refequ{volRenderIntegral} are based on splitting the integral into intervals.
Efficient implementations on GPUs are so-called compositing schemes where color and opacity is accumulated iteratively.
Front-to-back compositing 
for the accumulated radiance $C_{dst} = (I_r,I_g,I_b)^T$, and the accumulated opacity $\alpha_{dst} = (1-T_{dst})$
is given by \refalg{compositing}.

\begin{algorithm}
    \caption{Front-to-back compositing}
    \label{alg:compositing}
\begin{algorithmic}
    \State $T \gets (1-\alpha_{src})^{\Delta s_i/\xi}$
    \State $C_{dst} \gets C_{dst} + (1-T) (1-\alpha_{dst}) C_{src}$
    \State $\alpha_{dst} \gets\, \alpha_{dst} + (1-T) (1-\alpha_{dst})$
\end{algorithmic}
\end{algorithm}

\noindent
Here, $\Delta s_i$ is the varying ray segment length and $\xi$ is a standard length.
Furthermore, $C_{src}$ and $\alpha_{src}$ are given by the transfer function through \refequ{emissionabsorption}.

In many application areas transfer functions contain high frequency components, dictating a high sampling rate (Nyquist rate).
One method is \textit{oversampling}, i.e., introducing additional sampling points, although the underlying scalar field is approximately linear.
Evaluating the scalar function in the setting of isogeometric volume rendering is a time-intensive operation as it means computing \eqref{eq:Bspline}.

To avoid oversampling a common technique is \textit{pre-integration} (see e.g., \cite{roettger2003smart,Engel:2001}), which is based on calculating the volume render integral for pairs of sample values in advance.
Although this approach can successfully be applied in many cases, it has the flaw that it only works for equidistant sample points, because the volume render integral ~\eqref{eq:volRenderIntegral} is nonlinear.

To avoid oversampling, but account for the non-linearity of $I$, we use the following technique called \textit{supersampling}.
In contrast to pre-integration, supersampling uses a dense quadrature of the volume render integral assuming linearity of the scalar field between two sample points $\rho_\gamma(s_j),\rho_\gamma(s_{j-1})$ (not assuming linearity of the volume render integral between sample points), see also \reffig{subsample}.

Although this approach also has its weaknesses, it is easy to implement, and it successfully captures high frequencies of the transfer function.
This increases the image quality by reducing wood-grain artifacts, while avoiding computationally expensive evaluations of the spline function $\rho$ and $\phi$.

\subsubsection{Pixel-Accurate Rendering of Geometry}\label{sec:pixelaccuracy}

\begin{figure}[t]
    \centering
        \begin{tikzpicture}[
        scale=0.8,
        grid/.style={very thin,gray},
        cube/.style={thick,line join=round},
        cube hidden/.style={thick,dashed,line join=round}]
        \drawRayCastingISOsetup
        \def\translatePAR{6.5}
        \drawRayCastingISOscreen
        \drawFrustum
        \drawRayCastingISOrayinparamVolume
        \drawRayCastingISObase
        \drawBadApprox
        \node (frustum) at (0.2,-0.5) {};
        \node (frustumdescription) at (2.2,-2.5) {pixel frustum};
        \draw[thick,-triangle 45] (frustumdescription) to node {} (frustum);
        \end{tikzpicture}
    \caption{Pixel-accuracy: The preimage of the view-ray (blue curve in P) is approximated by the red points $\widetilde p_i$ in P. The image $\widetilde g_i = \phi(\widetilde p_i)$ should lie inside the pixel frustum, shown as a blue box in G, and have the correct depth order.}
    \protect \label{fig:pixelaccurate}
\end{figure}

In order to display the correct geometry of the isogeometric object, the approximation of the inverse of the geometry spline $\phi$ has to meet the following requirement, see \reffig{pixelaccurate} for an illustration.
Following \cite{Yeo:2012,hjelmervik2012direct}, we define the following.
\begin{definition}[Pixel-Accurate Approximation of Geometry]\label{def:pixelaccuracy}
    Let $\gamma(s)$ be the view-ray for a given pixel on the screen.
    The points $\tilde p_i, 1\leq i\leq n$ in the parameter domain P are called {\normalfont pixel accurate} sample points, if the following two requirements are fulfilled.
    \begin {itemize}
        \item All points $\tilde p_i$ must project into the pixel of the view-ray ({\normalfont "parametric accuracy"}):
            \begin{equation}
                \label{eq:deltaP}
                \Delta P \coloneqq 2 \Big\| \pi_s(\phi(\tilde p_i)) - {\scriptsize \begin{bmatrix} x \\ y \end{bmatrix}}\Big \|_\infty \leq 1,
            \end{equation}
            where ${\scriptsize \begin{bmatrix} x \\ y \end{bmatrix}}$ is the pixel's center and $\pi_s:\R^3\rightarrow\R^2$ is the projection to the screen.
        \item All points $\tilde p_i$ must have correct depth ordering along the view-ray ({\normalfont "covering accuracy"}).
For $2\leq i\leq n$:
            \begin{equation}
                \| \pi_\gamma(\phi(\tilde p_i)) - g_\text{eye} \|_2 \geq \| \pi_\gamma(\phi(\tilde p_{i-1})) - g_\text{eye} \|_2,
            \end{equation}
            where $g_\text{eye}$ is the position of the observer, and $\pi_\gamma:\R^3\rightarrow\R^3$ is the orthogonal projection onto the view-ray.
    \end{itemize}
\end{definition}
\noindent
This definition allows for a variable sample distance.

\section{Approach and Implementation}\label{sec:approach}

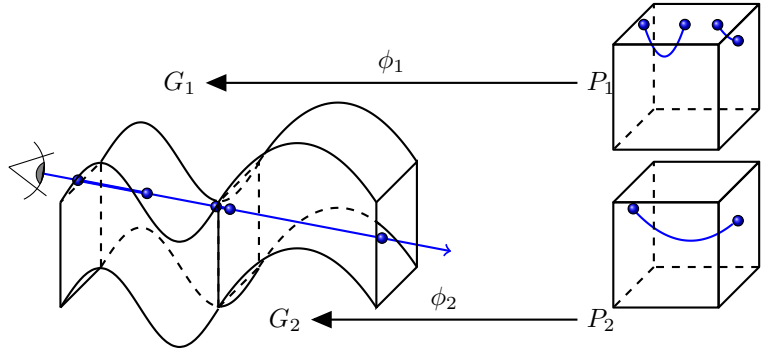
\begin{figure*}
    \centering
        \begin{tikzpicture}[
        scale=0.7,
        grid/.style={very thin,gray},
        cube/.style={thick,line join=round},
        cube hidden/.style={thick,dashed,line join=round}]
        \def\translatePAR{6.5+3}
        \drawRayCastingISOsetup
        \drawRayCastingISOrayingeoVolumeDepth
        \def\ma{3}
        \def\mb{0}
        \drawRayCastingISOrayinparamVolumeDepthOne{\ma}
        \drawRayCastingISOrayinparamVolumeDepthTwo{\mb}
        \drawRayCastingMultiblock{\ma}{\mb}
        \end{tikzpicture}
    \caption{
        This isogeometric model consists of two volume blocks $G_1$ and $G_2$.
        Both non-convexity and the multi-block structure increase the number of possible intersections per view-ray.
        }
    \protect \label{fig:depthsorting}
\end{figure*}

We present a novel approach enabling interactive volume visualization of isogeometric models with pixel-accurate geometry.
It is often necessary to partition the model in order to be able to model real-life features such as holes or to improve the quality of the parametrization.
Therefore, isogeometric models often consist of a number of so-called \textit{volume blocks} $G_\alpha$, $1\leq\alpha\leq a$, see \reffig{depthsorting}.
Each block has an corresponding spline $\phi_\alpha: P_\alpha \rightarrow G_\alpha$ defining the geometry, and scalar field $\rho_\alpha: P_\alpha \rightarrow \R$.

As described in \refsec{volumerendering}, an important step for approximating the volume render integral~\eqref{eq:volRenderIntegral} is to determine the intersections of the view-ray with the surfaces of the object.
When designers create isogeometric objects they often collapse edges or align two faces, which leads to singularities of the Jacobian of $\phi_\alpha$ on the boundary.
Existing approaches often struggle with finding intersections, leading to computationally expensive algorithms.
In order to devise an efficient and stable algorithm, our approach for volume-rendering consists of the following stages that are executed for every frame as part of the render pipeline.

\begin{enumerate}
    \item \textit{View-Ray intersections with surfaces:}
          We determine the intersections of all view-rays with the surfaces of the object by reinterpreting the problem as an approximation of surfaces, see \refsec{surfacerendering}.
    \item \textit{Depth-sorting of intersections:}
          Once all intersections are determined, they are sorted along each view-ray according to the distance from the origin (depth), see \refsec{depthSorting}.
    \item \textit{Approximate Volume Render Integral:}
          For each pair of entry and exit points the volume render integral~\eqref{eq:volRenderIntegral} is approximated, using \textit{pixel-accurate} approximations of the \textit{inverse of the view-rays}, see \refsec{geometry}.
\end{enumerate}

\subsection{View-Ray Intersections with Surfaces}\label{sec:surfacerendering}

Computing the intersections between a ray and a spline surface can be a computationally expensive and unstable operation.
Commonly, the problem is stated as finding the zeros of a function.
However, the intersection problem can be restated as the problem of finding a view-dependent approximation of the surfaces of the object.
Therefore, the rasterization process of GPUs can be used as a very efficient, parallel implementation of finding all ray-surface intersections of a triangulation.
Two alternative approaches to construct a view dependent triangulation using the hardware tessellator where recently presented by Yeo et al.~\cite{Yeo:2012} and by Hjelmervik~\cite{hjelmervik2012direct}.
Both methods are applicable in our setting, guaranteeing both water tightness and that the approximation error satisfies the requirements from \refdef{pixelaccuracy}.
Our implementation uses \cite{hjelmervik2012direct} since it provides a single-pass algorithm.

In our approach all boundary surfaces of all volume blocks $G_\alpha$ are rendered (6 surfaces per block) and the result is stored in a texture buffer.

\subsection{Blockwise Depth-Sorting of Intersections}\label{sec:depthSorting}

The first stage of our approach (see \refsec{surfacerendering}) will lead to an unordered list of intersections per view-ray.
The number of intersections depends on view-angle, and the following two properties, depicted in \reffig{depthsorting}:
\begin{itemize}
    \item \textit{Non-convex objects:}
        For non-convex objects, the view-ray can intersect the geometry multiple times per block, leading to multiple entry and exit points along the view-ray.
    \item \textit{Multi-block objects:}
        Each volume block potentially adds further intersections.
\end{itemize}

Since modern GPUs allow atomic operations, our approach is based on a linked list (per pixel-location), which is populated in a single pass, as detailed in Yang et al.~\cite{Yang:2010:RCL:2383616.2383624}.
This method comes from a problem known as \textit{order-independent transparency} in computer graphics, see e.g., Everitt~\cite{Everitt01ii}, Carpenter and Ii~\cite{Carpenter84thea-buffer}, and Myers and Bavoil~\cite{Myers:2007:SRA}.

This list is the basis for volume rendering in the next stage.
In addition to the parameter value, each entry stores information about the depth, the block number, and the orientation (front facing or not).
In order to produce a pixel-accurate result according to \refdef{pixelaccuracy}, we choose pairs from the list of intersections according to \refalg{depthsorting}.
\begin{algorithm}[t]
    \caption{Blockwise depth-sorting}
    \label{alg:depthsorting}
    \begin{enumerate}
        \item Find $\pback$ by going through the list and choose entry
            \begin{itemize}
                \item with depth value closest to screen,
                \item that is not marked as used,
                \item and is not front facing
            \end{itemize}
        \item exit if the list is empty
        \item Find $\pfront$ by going through the list and choose entry
            \begin{itemize}
                \item with depth value closest to screen,
                \item that is not marked as used,
                \item that is front facing,
                \item and matches block number of $\pback$
            \end{itemize}
        \item render volume between $\pfront$ and $\pback$
        \item mark $\pback$ and $\pfront$ as used
    \end{enumerate}
\end{algorithm}

In IGA the surfaces of two neighboring blocks match exactly.
The pixel-accurate rendering of the surfaces described in \refsec{surfacerendering} will produce two entries in the list with approximately the same depth value.
However, \refalg{depthsorting} ensures that the found pairs correspond to the same block number.

\section{Pixel-Accurate Inverse of View-Rays}\label{sec:geometry}

In the previous \refsec{approach} we described how to obtain the intersections of the view-ray $\gamma$ with the surfaces of the volume in the parameter domain $P$, providing pairs of entry and exit points ($\gfront$ and $\gback$) for the compositing scheme.
In order to approximate the volume render integral \eqref{eq:volRenderIntegral} pixel-accurately (see \refdef{pixelaccuracy}), approximations for sample points between $\gfront$ and $\gback$ have to be found.

This section describes two alternative methods for finding an approximation of the inverse of the function $\phi$ for all points on the view-ray between $\gfront$ and $\gback$.
If not otherwise mentioned, we will assume that each pair of entry and exit points
\begin{enumerate}
    \item is given and pixel-accurate (ensured by \refsec{surfacerendering}),
    \item belongs to only one volume-block, denoted by $P_\alpha$ (ensured by \refsec{depthSorting}),
    \item the line connecting the entry with the exit point does not intersect the boundary of block $P_\alpha$ (this case is described in \refsec{cornercases}), and
    \item $\phi_\alpha$ is a diffeomorphism with a continuous first derivative on the block $P_\alpha$ (see \refsec{cornercases} for the case when $\Jac{\phi}$ is singular on the boundary $\partial P_\alpha$).
\end{enumerate}

We would like to point out that, under assumptions 1.-4., the problem of finding the inverse  is well-posed in the sense of Hadamard, since $\phi_\alpha$ is continuous, bijective and differentiable within the interior of each block.
The aforementioned conditions on $\phi_\alpha$ are reflected in the requirements for the numerical analysis performed on the isogeometric object.

The two alternative methods described in the following have certain advantages and disadvantages, discussed below.
It is worth noting that neither method is limited to the case where $\phi$ is a trivariate spline, but works as long as above assumptions on $\phi$ hold.

\begin{figure}
    \centering
    \includegraphics[width=0.99\linewidth]{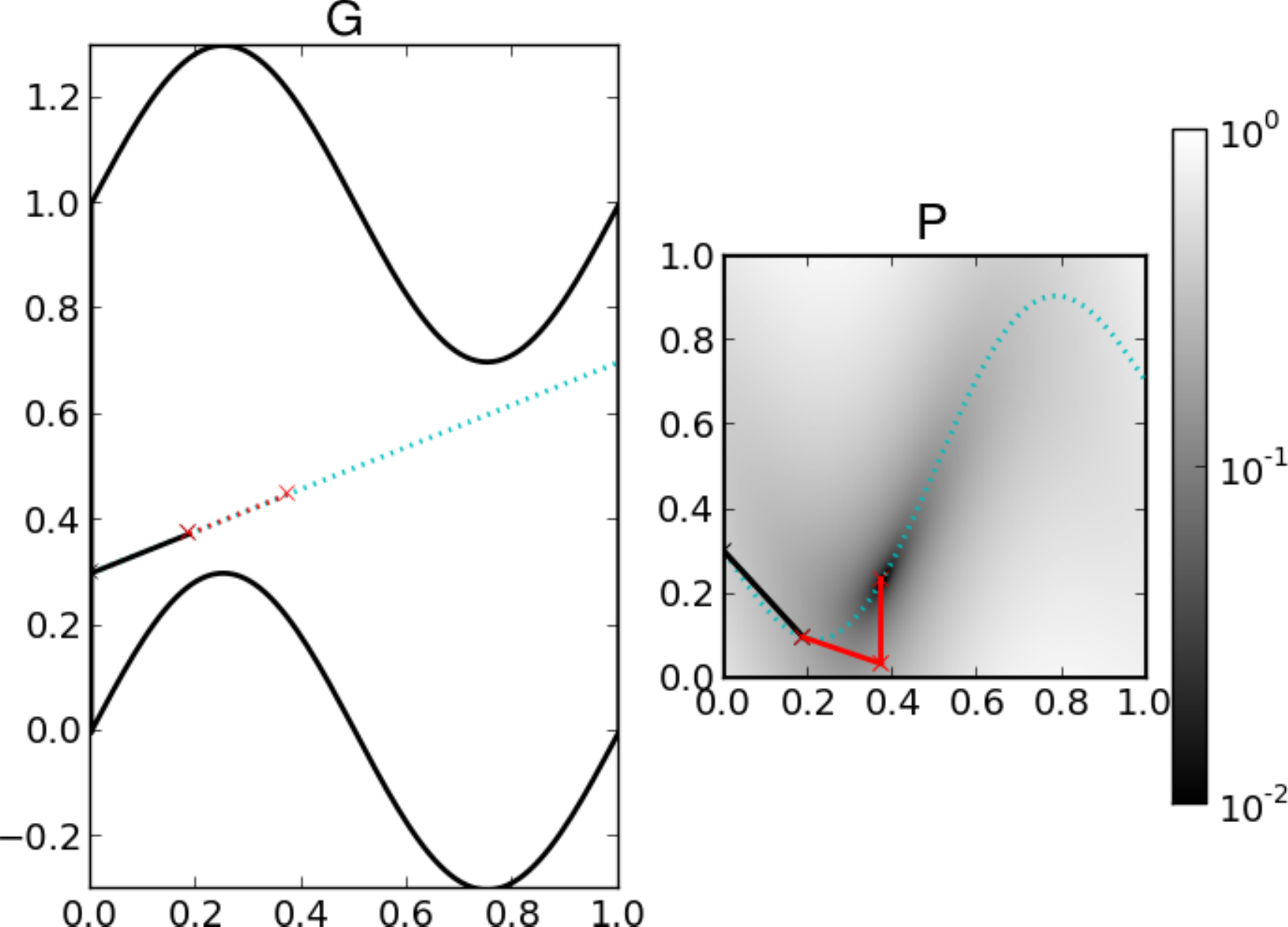}
    \caption{
        One step in the root finding based method for a two dimensional example, where $\phi(x,y)=(x,0.3\sin(2\pi x))$.
        The red lines in the parameter domain P show how the Newton-Raphson method iteratively converges towards the zero of \refequ{F}.
        The underlain image in P shows the norm of $F$.
    }
    \protect \label{fig:2dnewton}
\end{figure}

\subsection{Root Finding Based Algorithm.}\label{sec:root}

This first approach has been outlined by Martin and Cohen~\cite{Martin_Cohen:2001:REVAUTS}, but no practical details are discussed and, to the best of our knowledge, no implementation exists.
We will provide a brief description of the algorithm, followed by a discussion of the approach.

Given $\phi : P \rightarrow G$ the problem is to find the preimage of sample points $g_i = g_{i-1} + \Delta s_i \frac{\gback-\gfront}{||\gback-\gfront||_{L^2}}$ along the view ray.
Here, $g_0 = \gfront$ and $\Delta s_i$ is a (variable) sample distance.
Mathematically, the point $p_i$ that is the preimage of $g_i$ is in the null space of the following function
\begin{equation}\label{eq:F}
    F_{g_i}: P\rightarrow\R^3: p \mapsto \phi(p)-g_i.
\end{equation}
A standard method for finding approximations of the roots of function \eqref{eq:F} is the Newton-Raphson method,
given by
\begin{align}\label{eq:newtonsmethod}
    \Jac{F}(x_n) (x_{n+1}-x_n) &= -F(x_n),
\end{align}
where $\Jac{F}$ denotes the Jacobian matrix of $F$, and $x_n$ are the approximations to the root of $F$.
To solve the $3\times3$ linear system of equations \eqref{eq:newtonsmethod} we employ the QR-algorithm, see e.g., \cite{deuflhard2008numerische}.
See \reffig{2dnewton} for an example.

Note that for each iteration of \eqref{eq:newtonsmethod}, both $\phi$ and $\Jac{F} = \Jac{\phi}$ have to be evaluated at the same point.
Since $\phi$ is a spline function the Jacobian can be evaluated cheaply by reusing calculations for $\phi$.

The Newton-Raphson method converges quadratically for "good" starting points. The method can, however, fail in certain situations.
We will address each situation in the following for the problem at hand.

For a good argumentation let us note that a bijective $\phi$ induces a metric space $(P,d_P)$ on the open parameter domain $P$ of each block with
\begin{equation}\label{eq:metricOnP}
    d_P(p_1,p_2) \coloneqq \|\phi(p_1)-\phi(p_2)\|_{L^2}, \quad p_1, p_2 \in P.
\end{equation}
According to our basic assumptions, \refequ{F} has a unique solution and therefore a root of $F$ is a root of the norm of $F$ and vice versa.
Since
\begin{equation}\label{eq:equivalencetodist}
    \|F_{g_i}(p)\|_{L^2}=d_P(p,\phi^{-1}(g_i)),
\end{equation}
the problem of finding the inverse is equivalent to finding the minimum distance to the preimage of $g_i$.
As a consequence, $F_{g_i}$ will  not have horizontal asymptotes or local extrema/stationary points, and converge for all starting points in the interior of the block.
Overshooting can be an issue in case of large distance between samples or where the geometry is complicated.
Since we have a bounded domain $P$ we need to clamp values to the range of the parameter domain, usually $[0,1]^3$,
and use the method of line search.
We would also like to point out that since $\Jac{F} = \Jac{\phi}$ the Jacobian in \refequ{newtonsmethod} is non-singular on $P\backslash \partial P$ due to our basic assumptions.

The Newton-Raphson method \eqref{eq:newtonsmethod} needs a stopping criterion.
In our case we require the method to be pixel-accurate, so we stop the iteration when the distance to the view-ray (given by $\|F_{g_i}(x_n)\|_{L^2}$) is less or equal to the minimum distance of the point $g_i$ to the frustum boundary.

\subsection{ODE Based Algorithm.}\label{sec:ode}

We will now describe a second approach based on the fact that the view-ray $\gamma$ can be seen as an integral curve of a (first order linear) dynamical system, i.e., the solution of an ordinary differential equation (ODE).
The idea is to directly work in the parameter domain by appropriately defining an ODE for which the image of the solution coincides with the original view-ray $\gamma$.
This can be achieved by defining a vector field and numerically approximating the integral curve from the point where the ray enters the domain, see \reffig{influenceofC}.

\begin{figure}
\centering
    \centering
    \subfigure[Vector field \refequ{vec2} with no perpendicular component, i.e., $c=0$.]{\includegraphics[width=0.22\textwidth]{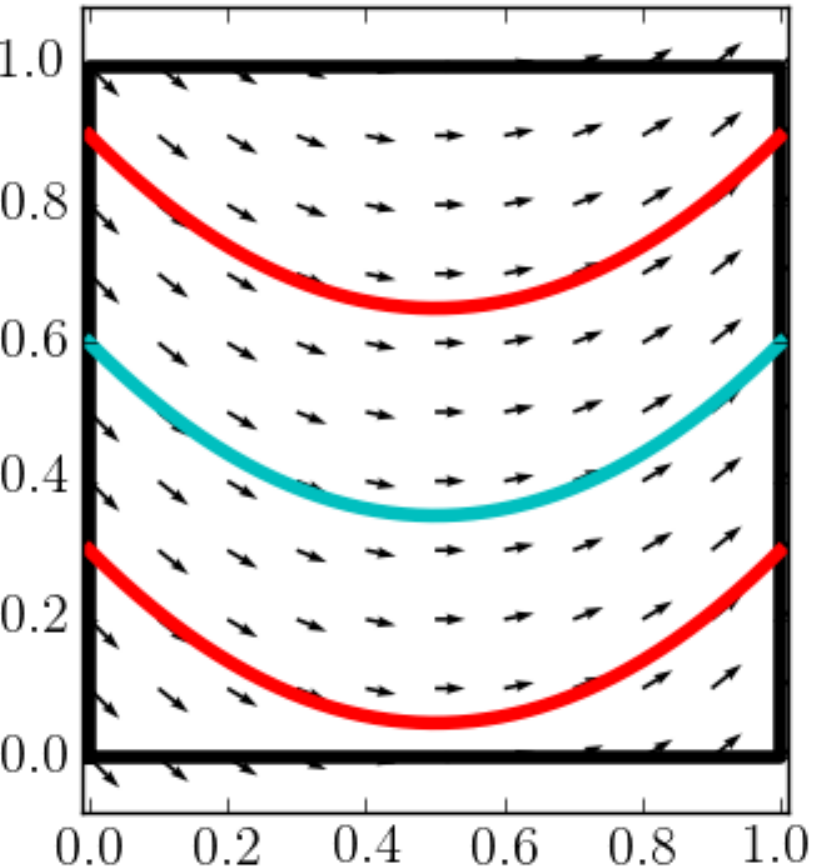}}
    \subfigure[Vector field \refequ{vec2} with perpendicular component $c=20$.]{\includegraphics[width=0.22\textwidth]{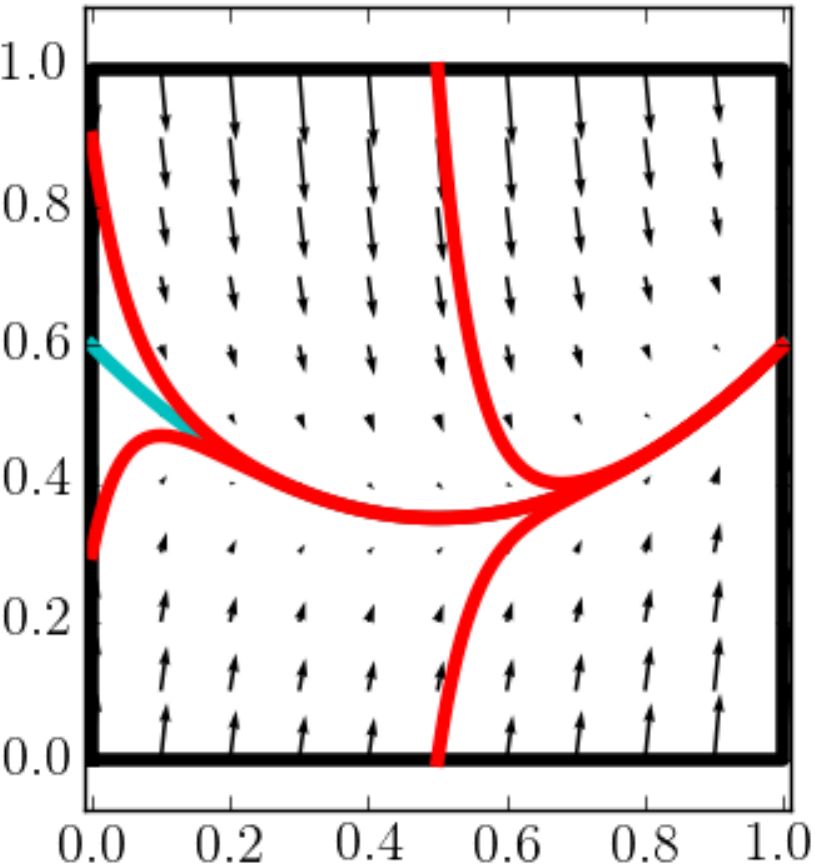}}
    \caption{
        The view-ray is shown in cyan, and streamlines for different start points are shown in red.
        The perpendicular component makes sure that the solution quickly converges towards the view ray.
    }
    \protect \label{fig:influenceofC}
\end{figure}

In order to describe our method we start by defining an ODE on the geometry domain through
\begin{equation}\label{eq:odegeo}
        g'(s) = V(g(s)), \quad \quad g(0) = \gfront.
\end{equation}
We define the vector field to be
\begin{equation}\label{eq:vec2}
    V(g)= \Vpar + c V_\perp(g),
\end{equation}
consisting of two components, where the constant $c$ is the (positive) relative weight between the two components.
The first component is a constant velocity parallel to the view-ray given by $\Vpar = \frac{\gback-\gfront}{||\gback-\gfront||}$.
The second component is a velocity perpendicular to the view-ray and depends on the signed distance to the view-ray, i.e.,
\begin{equation}
    V_\perp(g) = (\gfront - g) - <\gfront-g,\Vpar>\Vpar,
\end{equation}
where $<.,.>$ is the usual scalar product.
The motivation for introducing the perpendicular component $V_\perp$ is the following. 
The numerical method will inevitably introduce errors.
The larger $c$ the more will the numerical approximation be forced back to the exact solution, see \reffig{influenceofC} for an illustration.

The ODE \eqref{eq:odegeo} is a first order linear dynamical system, which can be rewritten in the standard form
\begin{align}\label{eq:odegeoMatrix}
    \begin{split}
        g' &= A g + b, \ \text{with} \\
        A &= c\begin{pmatrix}
              \Vpara^2 - 1 &  \Vpara\Vparb & \Vpara\Vparc \\
              \Vparb\Vpara & \Vparb^2 - 1  & \Vparb\Vparc \\
              \Vparc\Vpara & \Vparc\Vparb  & \Vparc^2 - 1
             \end{pmatrix},\\
             b &= c\gfront + (1 - c<\gfront,\Vpar>)\Vpar.
    \end{split}
\end{align}
The dynamics are determined by the eigenstructure of $A$, see e.g., \cite{perko2001differential}.
There are two negative eigenvalues $\lambda_{1,2} = -c$ and one zero eigenvalue $\lambda_{3} = 0$ with corresponding eigenvector
$v_3 = \Vpar$.
We can write a vector $u\in\R^3$ as the sum $u=u_\perp + u_\parallel$, where $\R^3 = M_\perp \oplus M_\parallel = \text{span}(v_1,v_2) \oplus \text{span}(v_3)$.
Furthermore, there exists a unique vector $m \in M_\perp$ such that $A m = b_\perp$.
The solution of \refequ{odegeoMatrix} is then
\begin{equation}
    g(t) = e^{At}(g_\parallel+m) - m + t b_\parallel,
\end{equation}
where $g_\parallel, b_\parallel \in M_\parallel$.
This analysis shows that the view-ray is a solution and that all solutions (irrespective of the starting point $g(0)$) converge exponentially towards the view-ray.

\begin{figure}
    \centering
    \begin{tikzpicture}[
        scale=0.6,
        grid/.style={very thin,gray},
        cube/.style={thick,line join=round},
        cube hidden/.style={thick,dashed,line join=round}]

        \def\translatePAR{6}
        \def\translateGEO{0}
        \node (P) at (\translatePAR+1,2.5) {P};

        \node (G) at (\translateGEO+1.5,2.5) {G};
        \node [anchor=south](geo) at ($(G.south)+(0,-7.5)$) {\includegraphics[width=0.38\linewidth]{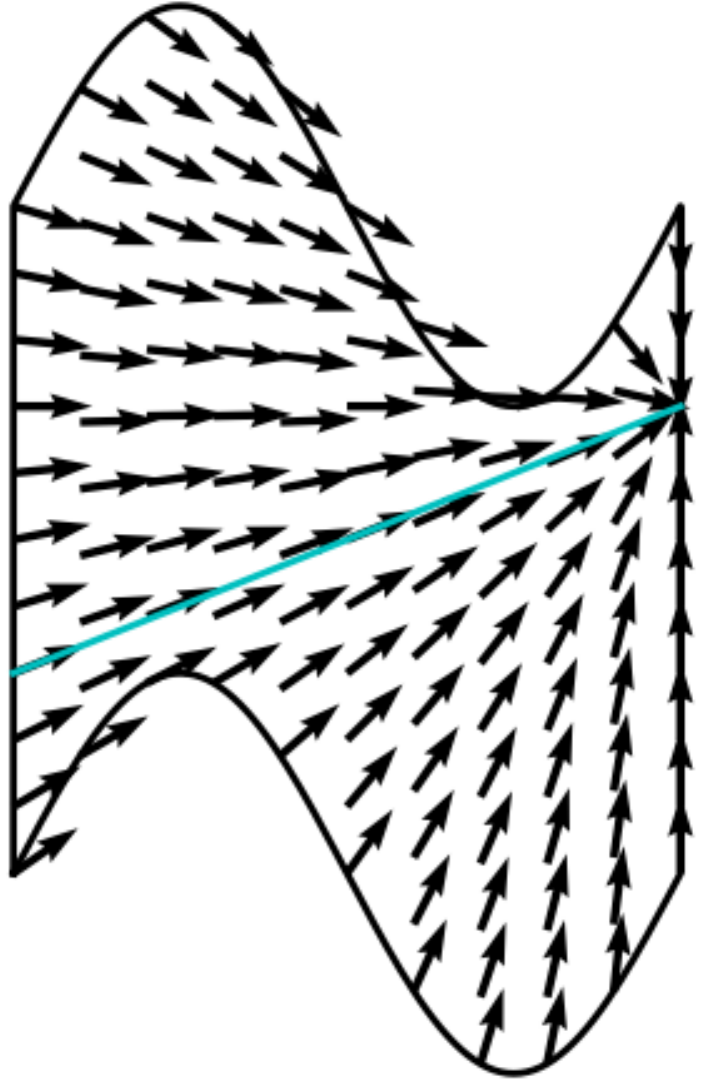}};
        \node [anchor=south](par) at ($(P.south)+(0,-6)$) {\includegraphics[width=0.38\linewidth]{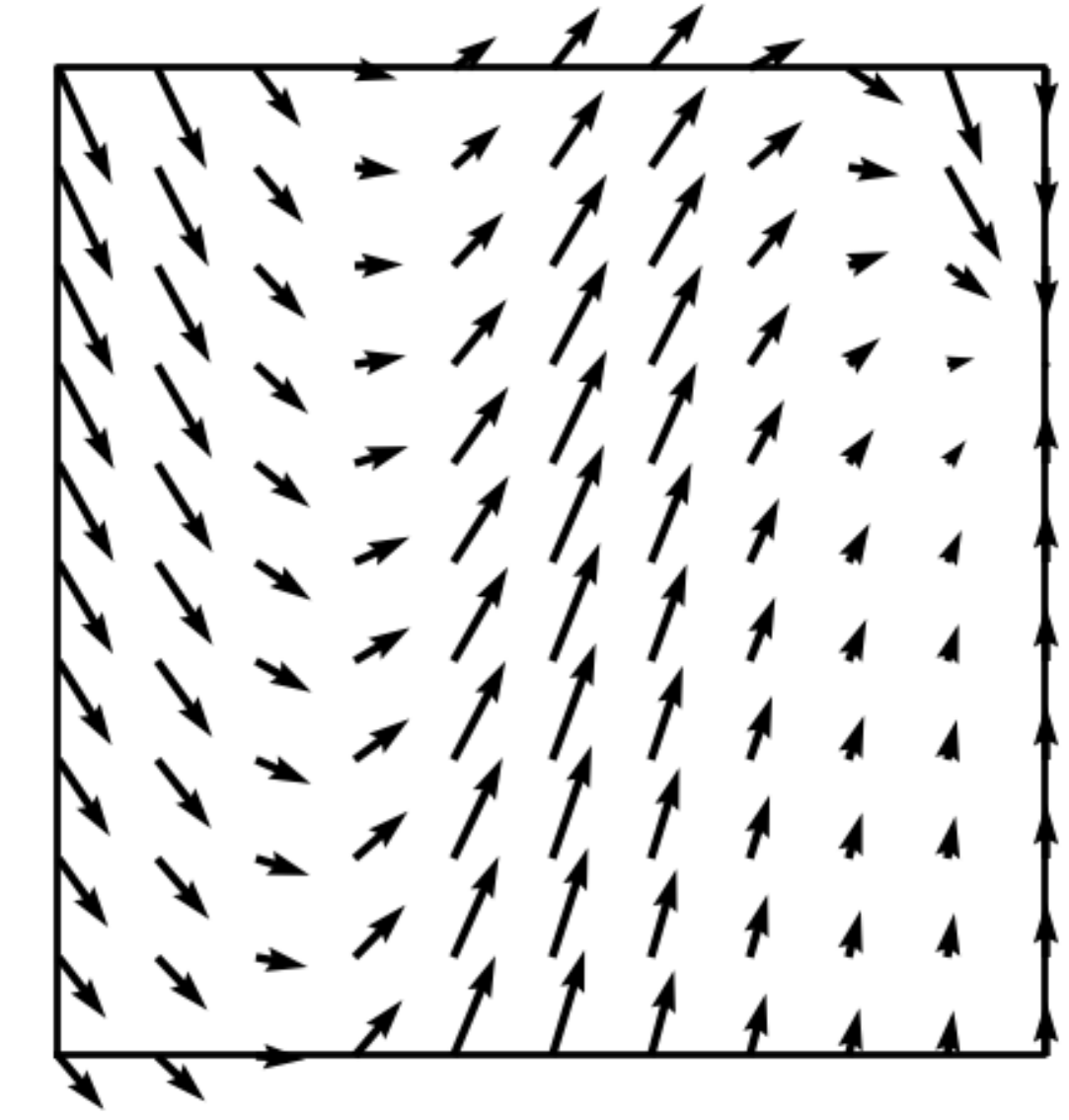}};
        \node (G) at (\translateGEO+1.5,2.5) {G};
        \draw[thick,->] (P) to node[above] {$\phi$} (G);
        \node (TP) at (\translatePAR+1,4.5) {TP};
        \node (TG) at (\translateGEO+1.5,4.5) {TG};
        \draw[thick,->] (TP) to node[above] {$\phi_*$} (TG);
        \draw[thick,->] (TP) to node[right] {$\pi$} (P);
        \draw[thick,->] (TG) to node[left] {$\pi$} (G);
    \end{tikzpicture}
    \caption{
        Tangent bundle for $\phi(x,y)=(x,y+0.3\sin(2\pi x))$ with vector field defined on G and pulled back to P.
        Projection given by $\pi(\omega,v) = \omega$.
    }
    \protect \label{fig:tangentbundlediagram}
\end{figure}

Next we will define a vector field $W(p)$ such that the solution of the ODE on the parameter domain P
\begin{align}\label{eq:odepar}
    \begin{split}
        p'(s) &= W(p(s)), \quad (s,p) \in \R \times [0,1]^3, \\
        \quad p(0) &= \pfront,
    \end{split}
\end{align}
coincides with the preimage of the solution of \refequ{odegeo}.

Using that $\phi$ is a differentiable map on the inside of $P$, the dual is given by
\begin{equation}
    \phi_{*}: TP\rightarrow TG, (p,v) \mapsto (\phi(p), \Jac{\phi}(p) v),
\end{equation}
where $T$ denotes the tangent space, see e.g., Spivak~\cite{spivak1979comprehensive}.
We then have that the diagram shown in \reffig{tangentbundlediagram} commutes.

Since $\phi$ is a diffeomorphism on the inside of $P$, it follows that the dual has an inverse given by,
\begin{equation}\label{eq:invdual}
    \phi_{*}^{-1}: (p,v) \mapsto (\phi^{-1}(p), (\Jac{\phi}(p))^{-1} v),
\end{equation}
using the inverse function theorem.
This can be used to "pull-back" the vector field $W(p)$ in \refequ{odepar}, by solving the following system of linear equations:
\begin{equation}\label{eq:VgVp}
    \Jac{\phi}(p) W(p) =  V(\phi(p)),
\end{equation}
Note that $\Jac{\phi}$ exists and is non-singular for all $p$ in the inside of $P$.
See \reffig{2dcomparisonvectorfield} for an example of how the vector field becomes non-trivial in the parameter domain.

Since the vector field $W$ is in general non-linear, we establish the following theorem.

\begin{theorem}[Existence and Uniqueness]\label{theorem:odeExUn}
    Under the condition that $\phi$ is a diffeomorphism with a continuous Jacobian, there exists a unique solution to the initial value problem \eqref{eq:odepar} (with \eqref{eq:VgVp}) that continues up to the boundary.
\end{theorem}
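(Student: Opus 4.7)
The plan is to transfer the initial value problem from the parameter domain $P$ to the geometry domain $G$, where the vector field is affine (hence globally Lipschitz) and the classical Picard--Lindel\"of theorem applies directly, and then to pull the unique solution back through $\phi^{-1}$. This sidesteps the issue that $W$, being of the form $\Jac{\phi}^{-1} V\circ\phi$, is only guaranteed to be continuous under the hypothesis of a continuous Jacobian, so standard Lipschitz-based existence/uniqueness arguments in $P$ are not immediately available.

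For existence, I would first invoke equation~\eqref{eq:odegeoMatrix}: since $V(g) = Ag + b$ is affine, the IVP $g'(s) = V(g(s))$, $g(0) = \gfront$, has a unique global solution $g : \R \rightarrow \R^3$. Then, on the maximal open interval $I \ni 0$ on which $g(s) \in \phi(P_\alpha)$, define $p(s) \coloneqq \phi^{-1}(g(s))$. The inverse function theorem (applicable because $\Jac{\phi}$ is continuous and non-singular on the interior of $P_\alpha$) gives $\phi^{-1} \in C^1$, and differentiating yields
\begin{equation*}
p'(s) \;=\; \Jac{\phi}(p(s))^{-1}\, g'(s) \;=\; \Jac{\phi}(p(s))^{-1}\, V(\phi(p(s))) \;=\; W(p(s)),
\end{equation*}
by \eqref{eq:VgVp}, together with $p(0) = \phi^{-1}(\gfront) = \pfront$. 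Thus $p$ solves \eqref{eq:odepar}.

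For uniqueness, I would take any second solution $\tilde p$ to \eqref{eq:odepar} on an interval containing $0$ and push it forward by setting $\tilde g \coloneqq \phi \circ \tilde p$. The chain rule combined with \eqref{eq:VgVp} gives $\tilde g'(s) = \Jac{\phi}(\tilde p(s))\, W(\tilde p(s)) = V(\tilde g(s))$ with $\tilde g(0) = \gfront$. The uniqueness clause of Picard--Lindel\"of in $G$ then forces $\tilde g \equiv g$, whence $\tilde p \equiv \phi^{-1} \circ g = p$ wherever both are defined.

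The step I expect to require the most care is the assertion that the solution \emph{continues up to the boundary} of $P_\alpha$. Because $g$ is defined on all of $\R$, the only way $p$ fails to extend is if $g(s)$ leaves $\phi(P_\alpha)$, equivalently if $p(s)$ approaches $\partial P_\alpha$. I would make this precise by a standard maximal-interval argument: if the image of $p$ on its maximal interval $I = (s_-, s_+)$ stayed inside a compact $K \subset P_\alpha$, then $\phi(K) \subset \phi(P_\alpha)$ would be compact, $g(s_\pm)$ would lie in $\phi(P_\alpha)$ by closedness, and $\phi^{-1}$ applied at that limit would let us continue $p$ beyond $s_\pm$, contradicting maximality. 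Hence $p(s)$ must accumulate on $\partial P_\alpha$ as $s \to s_\pm$, which is precisely the continuation-to-the-boundary statement.
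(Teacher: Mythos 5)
Your proof is correct, and it takes a genuinely different route from the paper's. The paper applies the Picard--Lindel\"of theorem (via Theorem~2.7 of \cite{deuflhard2002scientific}) directly to the initial value problem in the parameter domain, asserting for every compact $S\subset P$ a bound $\|\Jac{\phi}^{-1}(p_1)v_1-\Jac{\phi}^{-1}(p_2)v_2\|\leq C_\phi\|v_1-v_2\|$ and concluding $\|W(p_1)-W(p_2)\|\leq C_\phi\|A\|_{op}L_\phi\|p_1-p_2\|$, so that $W$ is locally Lipschitz in $P$. You instead solve the affine system \eqref{eq:odegeoMatrix} globally in $G$, pull the solution back through $\phi^{-1}$ (legitimate by the inverse function theorem), transfer uniqueness by pushing any competing solution forward, and deduce continuation to the boundary from the global existence of $g$ together with a compactness argument; all three steps are sound. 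Your route has a concrete advantage: it needs only what the theorem hypothesizes, namely that $\phi$ is a $C^1$ diffeomorphism. The paper's displayed Lipschitz inequality is not correct as written (take $v_1=v_2$ with $p_1\neq p_2$: the right-hand side vanishes while the left-hand side need not), and repairing it so that $W$ is genuinely locally Lipschitz requires $\Jac{\phi}^{-1}$ --- hence $\Jac{\phi}$ --- to be locally Lipschitz, i.e., slightly more regularity than continuity of the Jacobian alone; with only continuity one gets existence by Peano but not uniqueness by this argument. What the paper's formulation buys in exchange is a quantitative Lipschitz constant for $W$ in the parameter domain, which is the object the numerical schemes \eqref{eq:explicitEuler} and \eqref{eq:implicitEuler} actually discretize and which feeds directly into their stability and convergence analysis, whereas your argument establishes well-posedness of \eqref{eq:odepar} indirectly through $G$ without producing such a constant.
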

\begin{proof}
    According to Theorem 2.7 in \cite{deuflhard2002scientific} (based on the theorem of Picard-Lindel\"{o}f) it is enough to show that the right hand side of \refequ{odepar} is continuous and locally Lipschitz-continuous.
    For every compact subset of $S \subset P$ there exists a constant $C_\phi$, such that for all $v_1,v_2\in\R^3$ and $p_1,p_2\in S$
    \begin{equation*}
        \| \Jac{\phi}^{-1}(p_1)v_1 - \Jac{\phi}^{-1}(p_2)v_2\| \leq C_\phi \| v_1 - v_2\|,
    \end{equation*}
    since $\phi$ has a continuous Jacobian.
    It is then easy to show that 
    \begin{equation}\label{eq:PicLind}
           \| W(p_1)-W(p_2)\| 
           \leq C_\phi \|A\|_{op} L_{\phi} \| p_1 - p_2\|,
    \end{equation}
    where $\|.\|_{op}$ is the operator norm and $L_{\phi}$ is the Lipschitz constant of $\phi$.
\end{proof}

Solutions of the ODE \eqref{eq:odepar} can be approximated using a wide variety of numerical methods, e.g., explicit Runge-Kutta methods (see \cite{deuflhard2002scientific}).
The simplest scheme is the explicit Euler method given by
\begin{equation}\label{eq:explicitEuler}
    p(s^{n+1}) = p(s^n) + \Delta s^n W\left(p(s^n)\right),
\end{equation}
where we use standard notation using superscripts indicating discrete "times" and $\Delta s^n = (s^{n+1}-s^n)$. 
Note that in each step we have to solve the linear system \eqref{eq:VgVp}.

The stiffness index of \refequ{odegeoMatrix} is $L=\max_i(|\lambda_i|) = c$, see e.g., \cite{deuflhard2002scientific}. While larger $c$ have the benefit of preventing the numerical approximation to deviate to far from the view-ray (see \reffig{2dcomparisonvectorfield}), the ODE will become increasingly stiff. This behavior will be inherited in the ODE on P as well.
Explicit solvers, such as the one described in \refequ{explicitEuler}, will suffer from impractically small $\Delta s^n$ \emph{for large $c$}.
Generally, implicit schemes have larger stability regions.
An \emph{A-stable} method suitable for stiff equations is the \emph{implicit Euler} scheme, given by
\begin{equation}\label{eq:implicitEuler}
    p(s^{n+1}) = p(s^n) + \Delta s^n W(p(s^{n+1})),
\end{equation}
where a system of non-linear equations must be solved in each step.
In order to achieve maximum stability, we do not solve for $p(s^{n+1})$ directly, but rather for the difference to the previous point, see e.g., ~\cite{deuflhard2002scientific}.
Thus, we have to solve
\begin{equation}\label{eq:newtonimpliciteuler}
    G(z) = z - \Delta s^n W(z + p(s^{n})) = 0,
\end{equation}
numerically, for which we apply the Newton-Raphson method in order to approximate the solution.
This method depends on the Jacobian of $G$, which can be approximated.
However, in order to increase performance we avoid further evaluations of $G$ and instead derive an exact expression for $\Jac{G}$.
By rewriting \refequ{newtonimpliciteuler} as
\begin{equation}\label{eq:newtonimpliciteulerReformulated}
    \Jac{\phi}(z+p^n)  G(z) = \Jac{\phi}(z+p^n) z - \Delta s^n V(\phi(z + p^n)),
\end{equation}
where $p^n=p(s^{n})$ and applying the Jacobian operator to both sides, we derive (using the chain- and product rule) the following linear system of equations
\begin{align}\label{eq:JacG}
    \begin{split}
        \Jac{\phi}(z+p^n) \Jac{G}(z) = H_{\phi}(z+p^n) (z - G(z)) \\
         + \left(\mathcal{I}- \Delta s \Jac{V}(\phi(z + p^n)\right) \Jac{\phi}(z+p^n).
    \end{split}
\end{align}
Here, $\mathcal{I}$ is the identity matrix and $H_\phi$ denotes the Hessian of $\phi$, a tensor of order 3.
Observe, that the spline $\phi$ along with its first and second order derivatives are evaluated at the same point, allowing for an efficient calculation in a shader program.
The Jacobian of the vector field $V$ is the following constant matrix
\begin{equation}
    \Jac{V}(g) = c\left( C -\mathcal{I} \right),
\end{equation}
where the i-th column of C is given by $V_{\parallel,i} V_\parallel$.

By solving the linear \refequ{JacG} we can calculate the Jacobian matrix of $G$ and use it for the Newton-Raphson method used for the implicit Euler method. It has the same matrix as the equation we need to solve to get the vector field $W$, see \refequ{VgVp}. This means that, when using the QR-algorithm for solving both linear equations, an efficient algorithm can reuse Q and R for solving \refequ{JacG}.

\subsection{Convergence study.}\label{sec:convergencestudy}
\reffig{2dcomparisonvectorfield} shows a two dimensional test case given by 
$\phi(x,y)=(2x,y+0.3(1-x)\sin(10\pi x))$ with $\pfront = (0,0.3), \pback = (1,0.7)$.
In \reftable{2dcomparisonefficiency} we show how the different numerical methods converge to the exact solution.
We use the following notation:
(RK 1) explicit Euler method (first order),
(IRK 1) implicit Euler method (first order),
(RK 2) midpoint method (second order),
(RK3) Kutta's 3rd order method,
(RK4) classic 4th order method,
(RK4 3/8) 3/8 rule (4th order),
(RKF) Runge-Kutta-Fehlberg method (5th order),
and (RF) root finding method.
We observe that the ODE based algorithms converge with the expected order as the sample distance is reduced and the root finding based method reaches the given tolerance for all sample distances.

\begin{figure}
    \centering
    \includegraphics[width=0.90\linewidth]{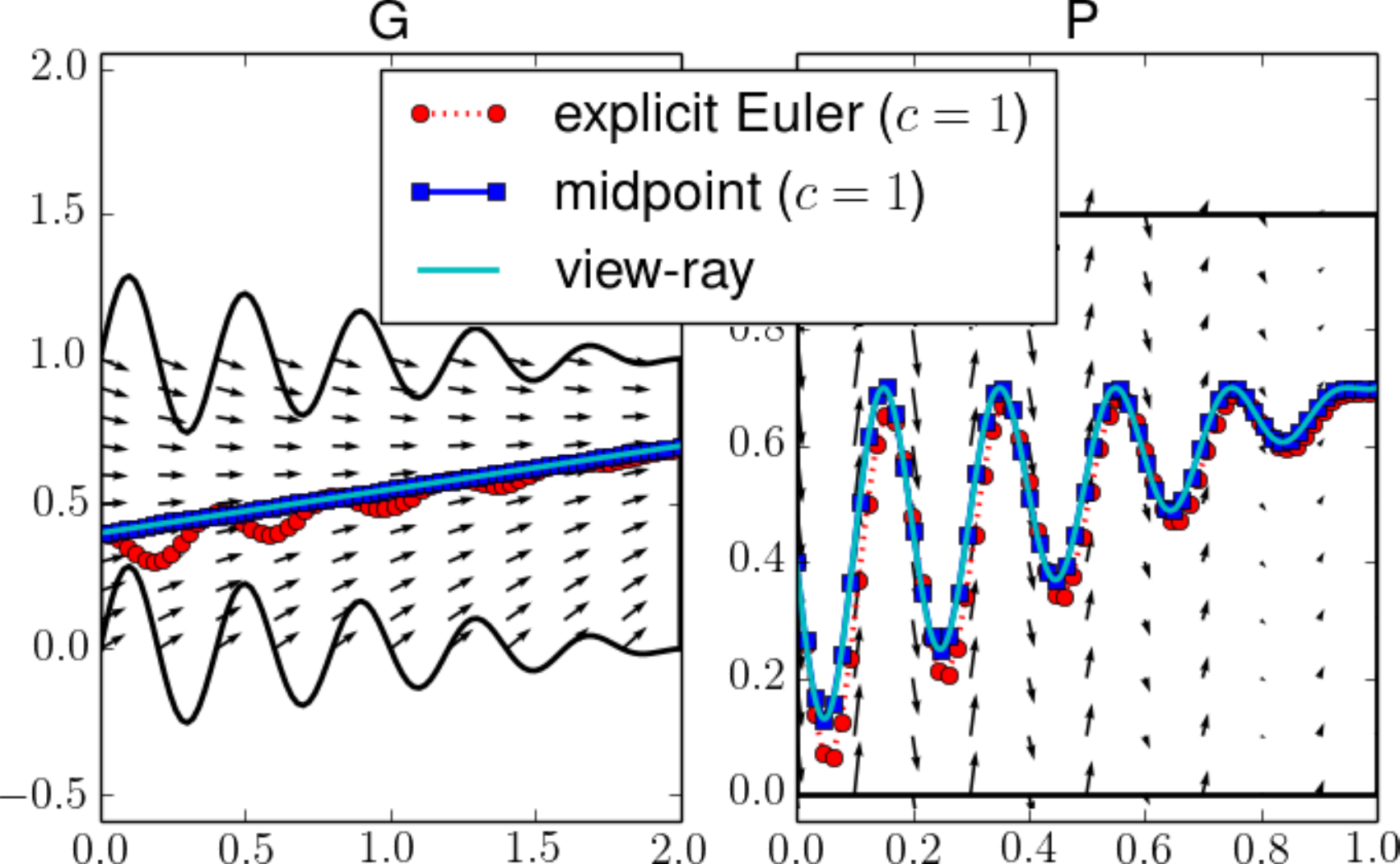}
    \caption{
        With the ODE based method, the preimage of the view-ray in the geometry domain $G$ is given by an integral curve in the parameter domain P. 
    }
    \protect \label{fig:2dcomparisonvectorfield}
\end{figure}
\begin{table}[t]
    \centering
    \begin{tabular}{| l | c | c | c | c |}
        \hline
        \multicolumn{1}{|c|}{$\Delta s$} & 1.6e-02 & 7.8e-03 & 3.9e-03 & 2.0e-03 \\
        \hline
        RK 1 (c=1) &
        6.2e-02 & 3.1e-02 & 1.6e-02 & 7.8e-03 \\
        IRK 1 (c=100) &
        5.3e-03 & 1.6e-03 & 8.1e-04 & 5.4e-04 \\
        RK 2 (c=1) &
        8.6e-04 & 2.1e-04 & 5.2e-05 & 1.3e-05 \\
        RK 3 (c=1) &
        1.5e-06 & 1.7e-07 & 2.1e-08 & 2.5e-09 \\
        RK 4 (c=1) &
        4.2e-07 & 3.0e-08 & 2.0e-09 & 1.3e-10 \\
        RK 4 3/8 (c=1) &
        1.7e-07 & 1.3e-08 & 8.8e-10 & 5.7e-11 \\
        RKF (c=1) &
        3.0e-08 & 9.5e-10 & 2.9e-11 & 9.2e-13 \\
        RF (tol = 1e-3) &
        9.8e-04 & 9.8e-04 & 5.2e-04 & 1.3e-04 \\
        RF (tol = 1e-14) &
        2.5e-16 & 2.5e-16 & 2.5e-16 & 2.5e-16 \\
        \hline
    \end{tabular}

    \caption{
        For the case depicted in \reffig{2dcomparisonvectorfield}, the ODE based algorithms (for notation see \refsec{convergencestudy}) show the expected convergence rates, and the root finding based method (RF) reaches the given tolerance.
        The error is defined by $e_{L^\infty} = \max_i \| (\pfront - \phi(p_i) - <\pfront-\phi(p_i),V_\parallel>V_\parallel \|_{L^2}$.
    }
    \protect \label{table:2dcomparisonefficiency}
\end{table}

\subsection{Degeneracies and Points Outside Domain.}\label{sec:cornercases}

There are two prominent cases for which the methods described in sections~\ref{sec:root} and \ref{sec:ode} need minor adjustments.
The first case is when the line between $\gfront$ and $\gback$ intersects the boundary of the volume block.
Although this is a rare case, it can happen that the approximation of the surfaces "misses" intersections in the tessellation of the geometry (described in \refsec{surfacerendering}).
This case is shown in \reffig{intersecting}, where the approximated surface (dashed black line in geometry domain) is still pixel-accurate, but the view-ray intersects the exact surface.
In such a case, the Newton-Raphson method in both the implicit Euler method as well as the root finding based method will not converge, but repeatedly try to exit the parameter domain of the corresponding block.
We detect such behavior and step along the boundary of the parameter domain until the view-ray is within the domain again.
For the explicit Runge-Kutta methods we simply clamp the approximated solution values to remain in $P$.
Observe, that the resulting approximation of the view-ray (seen in \reffig{intersecting}) is still pixel-accurate, since the approximation of the surface is guaranteed to be so.

The second case is when there are degeneracies of the spline $\phi$ along the boundary, see e.g., \reffig{DemonstratorQuality}.
Assume for instance that the Jacobian $\Jac{\phi}$ is singular at the entry point $\gfront$.
Since both the root finding method (\refsec{root}) as well as the ODE based methods (\refsec{ode}) involve solving a system of linear equations with a singular matrix in that case (see Equations~\eqref{eq:newtonsmethod} and \eqref{eq:VgVp}), the only viable choice is to "shrink" the block in the following way.
By choosing a new entry point $\tildegfront = \gfront + \delta \frac{\pback-\pfront}{||\pback-\pfront||_{L^2}}$, with a suitable $\delta>0$.
The new entry point $\tildepfront$ in the parameter domain $P$ can be found with the root finding method with a different starting point for the iterations in the Newton-Raphson method, for instance
\begin{equation}\label{eq:degeneracyfix}
    x_0 = \pfront + \eps \frac{\pback-\pfront}{||\pback-\pfront||_{L^2}},
\end{equation}
with an appropriately chosen $\eps$.
As can be seen in the middle of \reffig{teaser} this approach works well.

\begin{figure}
    \centering
    \includegraphics[width=0.9\linewidth]{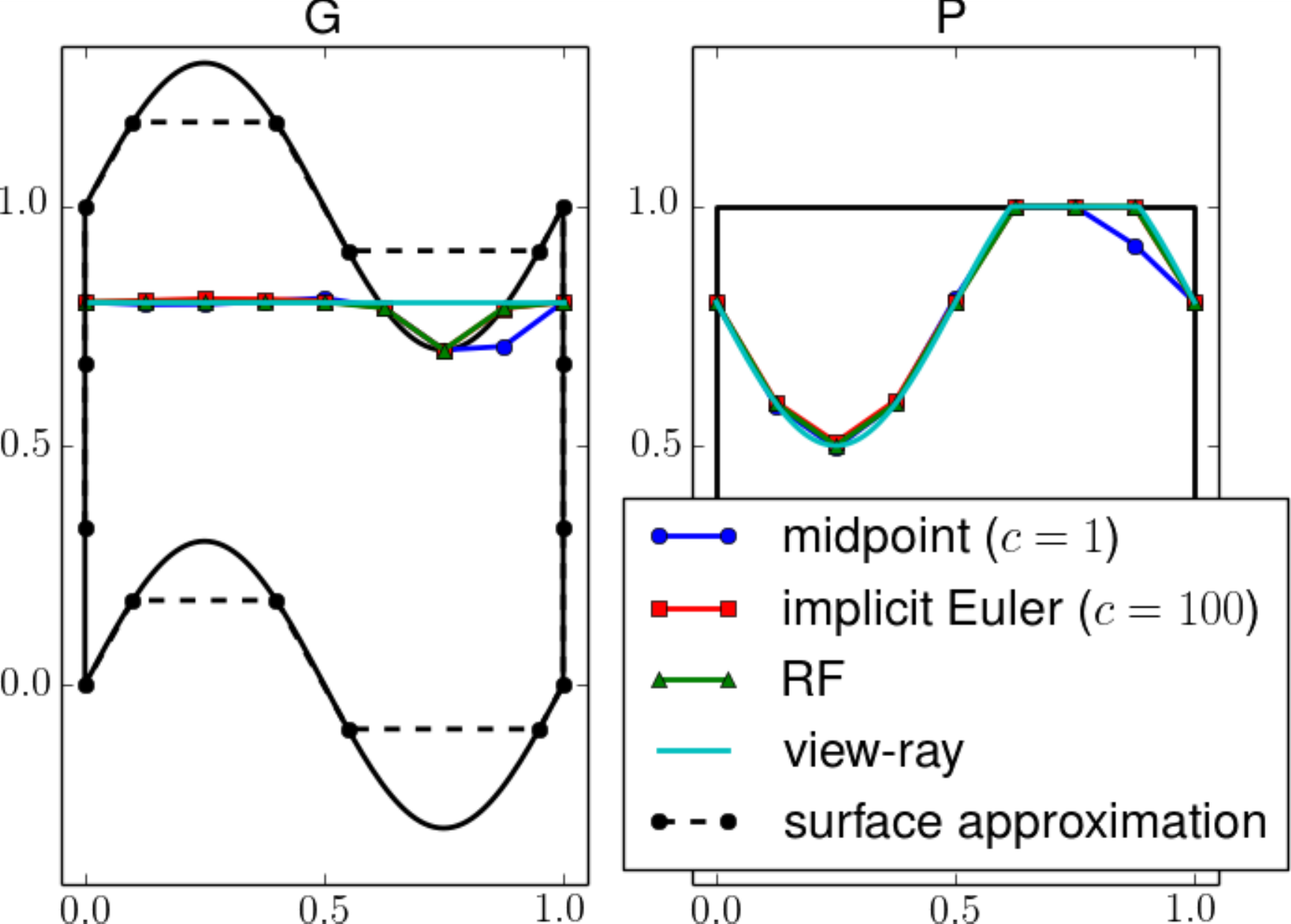}
    \caption{When the view-ray intersects the boundary of a block due to a large pixel-frustum, both the root finding method (RF) and the methods based on ODEs still lead to a pixel-accurate approximation.}
    \protect \label{fig:intersecting}
\end{figure}

\subsection{Cutting and Near Clip Planes}\label{sec:cuttingplane}

The root finding based method is well suited for realizing cutting planes.
Before the final compositing, the depth-sorted list of intersections (from \refsec{depthSorting}) is handled in the following way
\begin{itemize}
    \item \textit{Cutting planes}:
          The plane is given by a point $g_0\in G$ and a normal $n$.
          If $(n\cdot\gfront-g_0)$ and $(n\cdot\gback-g_0)$ have opposite signs, the view-ray $\gamma$ between $\gfront$ and $\gback$ intersects the cutting plane.
    \item \textit{Near-Clip plane}:
          Since the rendering of the surfaces is water tight, an odd number of ray-surface intersections means that the near plane is inside the volume.
\end{itemize}
In both cases, trivial formulas determine the intersection point $g_* \in G$.
Given $g_*$, we need to find the corresponding point $p_*\in P$ such that $\phi(p_*) - g_* = 0$.
Thus, finding $p_*$ is exactly solving \refequ{F}.
Since this point can be quite far away from $\pfront$ or $\pback$ the root finding algorithm described in \refsec{root} is the best alternative and can readily be used.

\section{Applications and Performance}\label{sec:applications}

In order to benchmark the performance of the proposed methods, we present the three different application scenarios shown in \reffig{teaser}, covering a wide range of possible applications. 
We apply the approach described in Sections~\ref{sec:approach} and \ref{sec:geometry} in each case, and compare it with standard volume rendering algorithms, where we have precomputed a voxelized version of the model.
The resulting texture has 16 bit and uses the red channel for the scalar value and the green channel to encode if the voxel is inside the object or not.
Of course, many optimization strategies are established in standard volume rendering, such as adaptive sampling rates, out of core algorithms, et cetera.
However, to allow for a fair comparison, we only use an out of the box implementation without any optimizations.
In all cases we measure the performance of our algorithm on an NVIDIA Titan GPU.

The proposed approach, allocates memory for the knots and the control points.
In addition, a buffer is allocated for the linked list containing all view-ray intersections with the surfaces (see \refsec{depthSorting}).

In order to measure how well the volume render integral~\eqref{eq:volRenderIntegral} is approximated we measure the color difference to a reference image with the norm "$\Delta E$" as defined by the International Commission on Illumination (CIE) in 2000.
In this metric $\Delta E=1$ means a "just noticeable difference".
We will, however, be less strict and regard values of up to 5 to be acceptable.

We start with an application useful in the design phase of the geometry.

\setlength{\tabcolsep}{.12em}
\begin{table*}
    \centering
    \subtable[ODE based method RK4 3/8 (fourth order).]{
        \begin{tabular}{| r | r | r | r | r | r | r | r |}
            \hline
            max(\# S) & max($\Delta P$) & max($\Delta E$) & mean($\Delta E$) & var($\Delta E$) & surf [ms] & ray [ms] & tot [ms] \\
            \hline
            11  &  2.1  & 55.064  & 2.946  & 21.834  & 0.86 & 8.02 & 9.15 \\
            23  & \cellcolor{lightgray} 0.6  & 34.447  & 1.421  & 7.211   & 0.86 & 13.74 & 14.91 \\
            47  & \cellcolor{lightgray} 0.6  & 17.771  & 0.537  & 1.092  & 0.86 & 24.92 & 26.18 \\
            95  & \cellcolor{lightgray} 0.6  &\cellcolor{lightgray} 4.728  & 0.189  & 0.111    & 0.86 & 46.67 & 47.97 \\
            190  & \cellcolor{lightgray} 0.6  &\cellcolor{lightgray} 2.105  & 0.077  & 0.022   & 0.86 & 88.22 & 89.74 \\
            381  & \cellcolor{lightgray} 0.6  &\cellcolor{lightgray} 1.167  & 0.038  & 0.012  & 0.86 & 172.70 & 174.50 \\
            762  & \cellcolor{lightgray} 0.6  &\cellcolor{lightgray} 1.092  & 0.020  & 0.006  & 0.86 & 337.30 & 339.70 \\
            1525  & \cellcolor{lightgray} 0.6  &\cellcolor{lightgray} 0.930  & 0.009  & 0.002 & 0.86 & 663.50 & 666.50 \\
            \hline
        \end{tabular}
    }
    \subtable[Voxelized method with texture size $227^3$.]{
        \begin{tabular}{| r | r | r | r | r |}
            \hline
            max(\# S) & max($\Delta E$) & mean($\Delta E$) & var($\Delta E$) & ray [ms]\\
            \hline
            70  & 79.283  & 6.609  & 157.129& 1.00 \\
            139  & 79.302  & 6.111  & 137.939& 1.58 \\
            277  & 78.049  & 5.616  & 118.935& 2.60 \\
            553  & 77.923  & 5.278  & 106.801& 4.57 \\
            1106  & 77.809  & 5.075  & 99.912 & 8.33 \\
            2211  & 77.726  & 4.962  & 96.138  & 15.38 \\
            4422  & 77.681  & 4.901  & 94.120  & 29.41 \\
            8844  & 75.615  & 4.865  & 92.794  & 55.56 \\
            \hline
        \end{tabular}
    }
    \subtable[ODE based method RK2 (second order).]{
        \begin{tabular}{| r | r | r | r | r | r | r | r |}
            \hline
            max(\# S) & max($\Delta P$) & max($\Delta E$) & mean($\Delta E$) & var($\Delta E$) & surf [ms] & ray [ms] & tot [ms] \\
            \hline
            11  &  20.4  & 54.582  & 3.066  & 22.910 & 0.86 & 5.87 & 6.99 \\
            23  &  4.8  & 35.191  & 1.433  & 7.277   & 0.86 & 9.43 & 10.57 \\
            47  &  1.1  & 17.771  & 0.531  & 1.075  & 0.86 & 16.15 & 17.30 \\
            95  & \cellcolor{lightgray} 0.6  &\cellcolor{lightgray} 4.745  & 0.186  & 0.108    & 0.86 & 29.43 & 30.67 \\
            190  & \cellcolor{lightgray} 0.6  &\cellcolor{lightgray} 2.110  & 0.076  & 0.022   & 0.86 & 55.33 & 56.78 \\
            381  & \cellcolor{lightgray} 0.6  &\cellcolor{lightgray} 1.167  & 0.038  & 0.012  & 0.86 & 105.80 & 107.40 \\
            762  & \cellcolor{lightgray} 0.6  &\cellcolor{lightgray} 1.091  & 0.020  & 0.006  & 0.86 & 203.10 & 205.70 \\
            1525  & \cellcolor{lightgray} 0.6  &\cellcolor{lightgray} 0.929  & 0.009  & 0.002 & 0.86 & 408.50 & 411.10 \\
            \hline
        \end{tabular}
    }
    \subtable[Voxelized method with texture size $341^3$.]{
        \begin{tabular}{| r | r | r | r | r |}
            \hline
            max(\# S) & max($\Delta E$) & mean($\Delta E$) & var($\Delta E$) & ray [ms] \\
            \hline
            70  & 79.213  & 5.760  & 126.757& 1.11 \\
            139  & 78.936  & 5.046  & 99.949 & 1.78 \\
            277  & 77.219  & 4.449  & 80.760 & 2.92 \\
            553  & 74.998  & 4.033  & 67.321 & 5.08 \\
            1106  & 74.790  & 3.777  & 59.694 & 9.26 \\
            2211  & 74.753  & 3.638  & 55.761  & 17.24 \\
            4422  & 74.800  & 3.563  & 53.635  & 33.33 \\
            8844  & 74.612  & 3.523  & 52.441  & 62.50 \\
            \hline
        \end{tabular}
    }
    \subtable[Root finding based (RF).]{
        \begin{tabular}{| r | r | r | r | r | r | r | r |}
            \hline
            max(\# S) & max($\Delta P$) & max($\Delta E$) & mean($\Delta E$) & var($\Delta E$) & surf [ms] & ray [ms] & tot [ms] \\
            \hline
            12  & 1.4  & 54.053  & 3.404  & 24.938  & 0.86 & 4.95 & 6.06  \\
            24  & \cellcolor{lightgray} 0.8  & 38.387  & 1.667  & 8.329  & 0.86 & 8.86 & 10.00 \\
            48  & \cellcolor{lightgray} 0.6  & 16.046  & 0.638  & 1.080  & 0.86 & 16.37 & 17.52 \\
            96  & \cellcolor{lightgray} 0.6  & 7.263  & 0.246  & 0.123   & 0.86 & 31.77 & 33.01 \\
            192  & \cellcolor{lightgray} 0.6  &\cellcolor{lightgray} 3.394  & 0.119  & 0.030   & 0.86 & 61.80 & 62.15 \\
            384  & \cellcolor{lightgray} 0.6  &\cellcolor{lightgray} 2.138  & 0.074  & 0.018  & 0.86 & 121.40 & 123.00 \\
            768  & \cellcolor{lightgray} 0.6  &\cellcolor{lightgray} 1.105  & 0.055  & 0.013  & 0.86 & 242.60 & 244.70 \\
            1536  & \cellcolor{lightgray} 0.6  &\cellcolor{lightgray} 0.962  & 0.035  & 0.006 & 0.86 & 484.50 & 487.10 \\
            \hline
        \end{tabular}
    }
    \subtable[Voxelized method with texture size $512^3$.]{
        \begin{tabular}{| r | r | r | r | r |}
            \hline
            max(\# S) & max($\Delta E$) & mean($\Delta E$) & var($\Delta E$) & ray[ms] \\
            \hline
            70  & 75.592  & 5.140  & 105.246& 1.54 \\
            139  & 74.997  & 4.245  & 73.714 & 2.65 \\
            277  & 74.956  & 3.581  & 55.287 & 4.03 \\
            553  & 74.624  & 3.133  & 43.332 & 6.41 \\
            1106  & 74.642  & 2.860  & 36.356 & 11.49 \\
            2211  & 72.952  & 2.701  & 32.473  & 21.28 \\
            4422  & 72.790  & 2.624  & 30.659  & 41.67 \\
            8844  & 72.414  & 2.585  & 29.749  & 83.33 \\
            \hline
        \end{tabular}
    }
    \caption{
    Statistics for the example shown left in \reffig{teaser} (parametrization quality for the twisted bar).
    max(\# S): maximum number of sample points;
    max($\Delta P$): largest error of pixel-accuracy among all pixels of the object;
    max/mean/var($\Delta E$): largest/mean of/variance of the color difference of a reference image;
    surf [ms]: time for creating ray-surface intersections;
    ray [ms]: time for blockwise depth-sorting and volume rendering for the proposed methods, time for standard volume rendering for "voxelized" method (texture is precomputed);
    tot [ms]: total render time.
    }
    \protect \label{table:accuracy}
\end{table*}

\begin{figure}
\centering
    \centering
    \subfigure[ODE RKF4 3/8 max 95 sample points]{\includegraphics[width=0.22\textwidth]{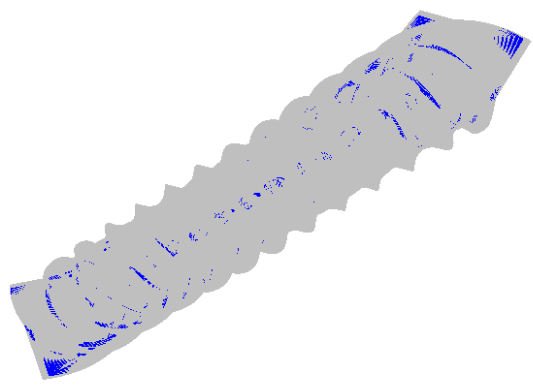}}
    \subfigure[ODE RKF4 3/8 max 95 sample points]{\includegraphics[width=0.22\textwidth]{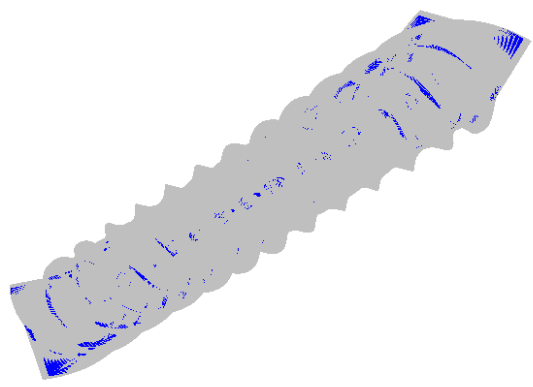}}
    \subfigure[RF max 96 sample points]{\includegraphics[width=0.22\textwidth]{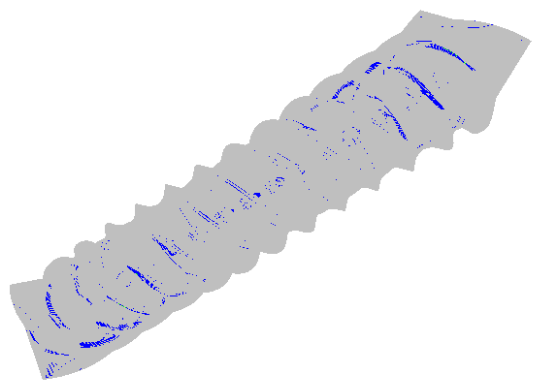}}
    \subfigure[Voxelized $512^3$, 1106 sample points]{\includegraphics[width=0.22\textwidth]{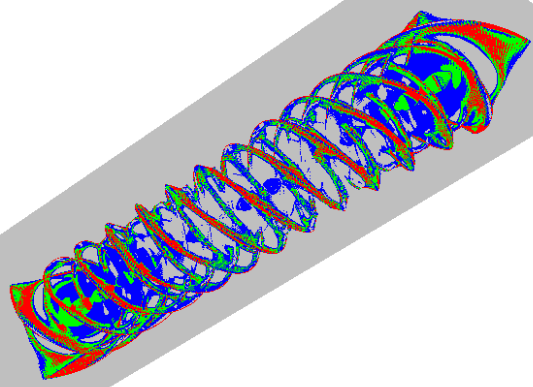}}
    \caption{
        Visualization of color difference to a reference solution.
        Each pixel is colored [grey, blue, green] if $\Delta E$ is at most [1,5,10].
        If it is red, the color difference is greater than 10.
        $\Delta E$ is the CIEDE2000 definition of color difference.
        Standard volume rendering of a voxelized version shows large errors even though 10 times as many samples are used.
    }
    \protect \label{fig:deltaEscaled}
\end{figure}

\subsection{Parametrization Quality of Geometry}

Before an analysis of a model can be carried out, the geometric shape has to be designed.
Since the parametrization of the geometry $\phi$ is not unique, one wants to optimize the quality of the parametrization.
A measure of the quality of the parametrization is given by
\begin{equation} \label{eq:paramQuality}
    \rho = \frac{\mbox{det}(\Jac{\phi})}{\|\Jac{\phi}\|_F},
\end{equation}
where low values indicate that the geometry is (close to) degenerate.
With this scalar field our method can be used as an inspection tool in the design phase, isolating potentially problematic areas.
The parametrization quality \eqref{eq:paramQuality} is calculated on the fly for each sample.

As an example we present a twisted bar, see left in \reffig{teaser}.
This model consists of only one volume block and the geometry is given by a quadratic B-spline with 425 control points.
We compare standard ray-casting of the pre-computed voxelized method with the proposed methods on a screen resolution of $640\times480$ pixels.
The first difference is that the proposed method uses 63 MB, and the voxelized method uses [67,178,528] MB for a texture size of [$227^3$,$341^3$,$512^3$].
For the proposed methods we can see in \reftable{accuracy} that the parametric accuracy $\Delta P$ (as defined in \refequ{deltaP}), decreases to 0.6 as the number of samples increases.
It does not decrease further because the $\Delta P = 0.6$ is already reached for the view-ray intersections on the surfaces.
We can also see that the color difference $\Delta E$ decreases with the number of sample points.

For the standard volume rendering of the precomputed ("voxelized") version of the model, the notion of pixel-accuracy is not meaningful.
Ultimately, one is interested in the color difference $\Delta E$.
\reftable{accuracy} shows that $\Delta E$ decreases with the number of sample points (for the volume render integral).
The color difference also decreases with increased texture size.
Naturally, for the same number of sample points along the view-rays, the volume rendering of the voxelized model is much faster than the proposed methods.
However, for a texture size of $512^3$ with almost 9000 sample points, the maximum color difference is still around 72, and the mean is larger than 2.
We can see in \reffig{deltaEscaled} that the color difference is highest at the boundary of the model, but for the voxelized model even the interior points show values between 5 and 10, meaning a noticeable difference to the reference image.

Overall, \reftable{performance} shows that the second order ODE based method performs best on this model.
Higher order ODE based methods typically allow larger sample distances while still being pixel-accurate.
In this example all ODE based methods, except the first order methods, have the same sample distance which is dictated by the volume render integral.
Therefore the second order midpoint method is fastest.

\subsection{Stress Analysis in Linear Elasticity}\label{sec:terrific}

\setlength{\tabcolsep}{.12em}
\begin{table*}
    \centering
    \subtable[Parametrization quality for the twisted bar, see left in \reffig{teaser}.]{
        \begin{tabular}{| l | r |}
            \hline
            method &  tot [ms] \\
            \hline
            RK 1 (c=1) & 332 \\
            IRK 1 (c=100) & 204 \\
            RK 2 (c=1) & 31 \\
            RK 3 (c=1) & 37 \\
            RK 4 (c=1) & 46 \\
            RK 4 3/8 (c=1) & 48 \\
            RKF (c=1) & 68 \\
            RF & 47 \\
            \hline
        \end{tabular}
    }
    \subtable[Von Mises stress for TERRIFIC model, see middle in \reffig{teaser}.]{
        \begin{tabular}{| l | r |}
            \hline
            method &  tot [ms] \\
            \hline
            RK 1 (c=1) & 354 \\
            IRK 1 (c=100) & 321 \\
            RK 2 (c=1) & 78 \\
            RK 3 (c=1) & 93 \\
            RK 4 (c=1) & 108 \\
            RK 4 3/8 (c=1) & 109 \\
            RKF (c=1) & 137 \\
            RF & 73 \\
            \hline
        \end{tabular}
    }
    \subtable[Backstep Flow from RANS simulation, see right in \reffig{teaser}]{
        \begin{tabular}{| l | r |}
            \hline
            method &  tot [ms] \\
            \hline
            RK 1 (c=1) & 41 \\
            IRK 1 (c=100) & 76 \\
            RK 2 (c=1) & 52 \\
            RK 3 (c=1) & 67 \\
            RK 4 (c=1) & 83 \\
            RK 4 3/8 (c=1) & 82 \\
            RKF (c=1) & 111 \\
            RF & 62 \\
            \hline
        \end{tabular}
    }
    \caption{
        Comparison of the performance of the proposed methods for visualization of different models on an NVIDIA Titan GPU and a screen resolution of $640\times 480$.
        All methods use the largest (uniform) sample distance of the volume render integral, but are at the same time pixel accurate, i.e., $\Delta P\leq 1$ (as defined in \refequ{deltaP}) and the color difference to a reference image is $\Delta E \leq 5$.
    }
    \protect \label{table:performance}
\end{table*}
Structural analysis is an important application area for isogeometric analysis, where external forces lead to a deformation of the object given in the form of a so-called displacement field
$u: P \rightarrow \R^3$.
The stress due to deformation is then calculated by
\begin{align}\label{eq:vonMises}
  \begin{split}
 \rho = \Big(&    (\sigma_{11}-\sigma_{22}) + (\sigma_{22}-\sigma_{33}) + (\sigma_{33}-\sigma_{11})\\
             & + 6(\sigma_{12}^2 + \sigma_{23}^2 + \sigma_{31}^2) \Big)/2.
  \end{split}
\end{align}
The so-called strain tensor is
$ \sigma = \frac{1}{2}( \Jac{u\circ\phi^{-1}} + \Jac{u\circ\phi^{-1}}^T ) $,
where $\Jac{u\circ\phi^{-1}} (g)$ is the solution of
$$ \left[ \Jac{\phi}(p) \right]^T \Jac{u\circ\phi^{-1}}(g) = \left[ \Jac{u}(p) \right]^T. $$
As in the previous example, all those expressions are calculated on the fly for each sample of the scalar field.

In the middle of \reffig{teaser} we present the results for the linear elasticity simulation from the TERRIFIC project.
Both geometry $\phi$ and deformation $u$ are cubic B-splines and the model consists of 15 volume blocks with 2484 control points.
All the proposed methods work well also in this case where there are some degeneracies along the boundaries, see \reffig{DemonstratorQuality}.
As \reftable{performance}~(b) indicates, the root finding based method and the second order midpoint method have the fastest render times.

\subsection{Computational Fluid Dynamics}

Another important application of isogeometric analysis is computational fluid dynamics (CFD).
On the right of \reffig{teaser} we present a visualization of an approximation of the solution of the Reynolds-averaged Navier-Stokes (RANS) equations for a backstep flow.
The scalar field $\rho$ represents turbulent viscosity and comes directly from the simulation.
The model uses quadratic B-splines to represent both the geometry and the scalar field and consists of 140 blocks with 888642 control points.
As can be seen from \reftable{performance} (c) the explicit Euler method has the fastest render time, followed by the second order midpoint method and the root finding based method.
Since the geometry and the scalar field is close to linear, the first order ODE solver is the most efficient compared with higher order ODE solvers.

\section{Conclusion}\label{sec:conclusion}

The presented approach allows interactive inspection of volumetric models used in isogeometric analysis.
In the spirit of isogeometry, the algorithms operate directly on the spline models and therefore demand very little GPU memory.
The proposed algorithms enable pixel-accurate geometry of both surfaces and volume irrespective of the zoom level, making it an asset during the design, analysis and marketing phase.
We applied our approach in three use cases relevant to industry showing good performance at interactive frame rates.

In the future we plan to develop algorithms that automatically choose a sample distance that ensures pixel-accuracy.
In addition, we seek to increase the efficiency of the presented methods by developing algorithms for adaptive sampling, as well as exploring methods for automatically choosing the order of the ODE based methods.

\begin{figure}
    \centering
    \includegraphics[width=0.99\linewidth]{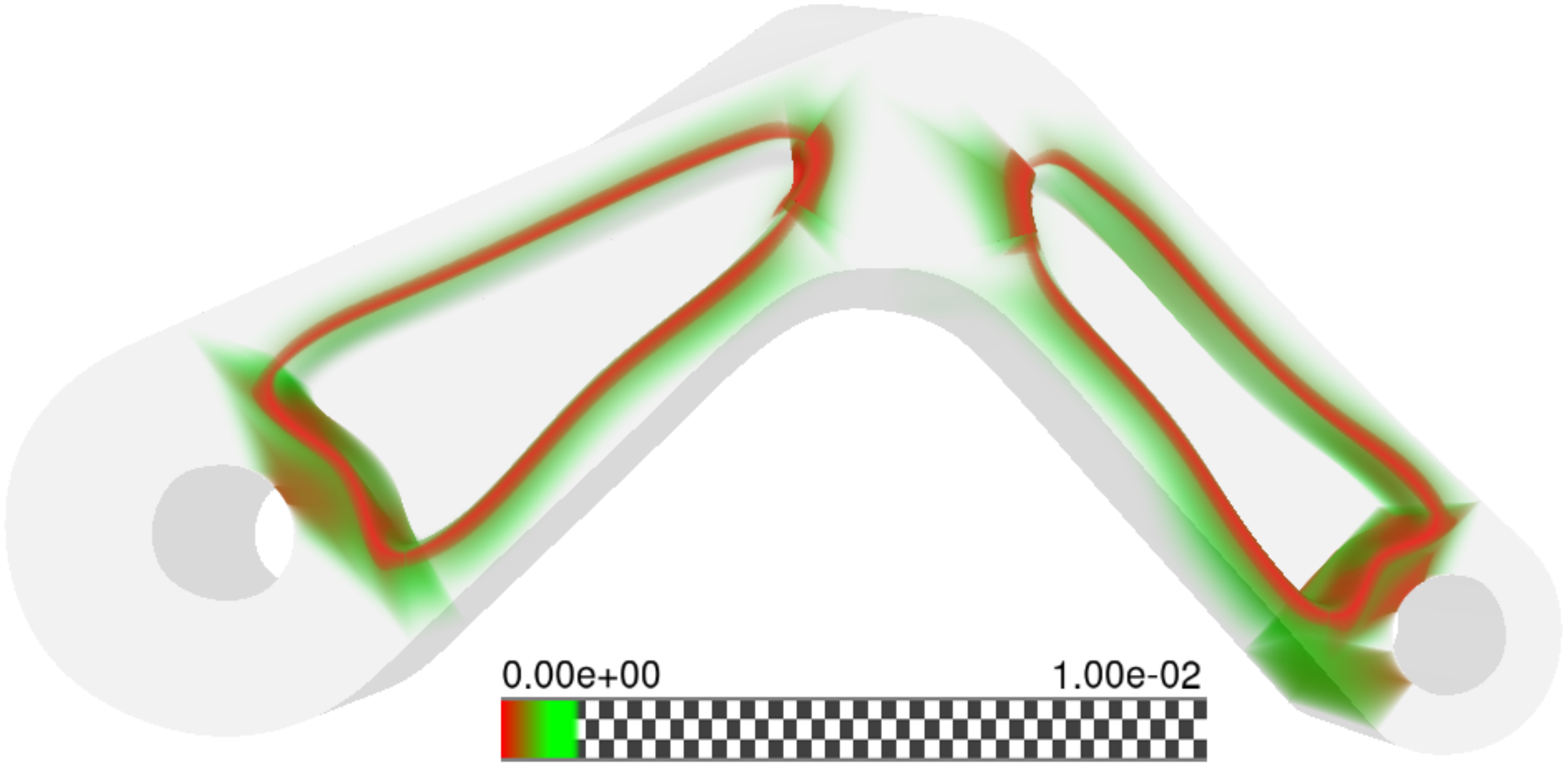}
    \caption{
        Quality of parametrization (see \refequ{paramQuality}) for the model from the TERRIFIC project see middle of \reffig{teaser}.
        Low values shown in red indicate areas with a problematic parametrization.
        The checkerboard pattern indicates alpha values less than 1.
    }
    \protect \label{fig:DemonstratorQuality}
\end{figure}

\newpage

\bibliography{isogeometric_volume_visualization}

\begin{biography}[{\includegraphics[width=1in,
    height=1.25in,clip,keepaspectratio]{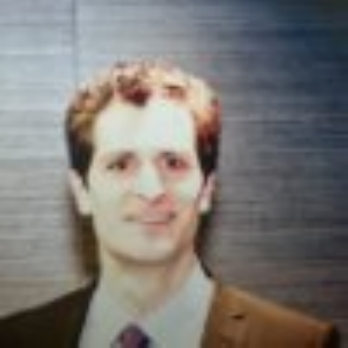}}]{Franz G. Fuchs}
    received his master's degree (Diplom) in mathematics from the Technical University of Munich (TUM) in 2006 with a thesis on image processing.
    In 2009 he received his PhD in applied mathematics from the University of Oslo (CMA), working on mathematical theory and numerical methods for hyperbolic conservation laws.
    His additional research interests include efficient numerical algorithms on parallel architectures for visualization and computation.
\end{biography}

\begin{biography}[{\includegraphics[width=1in,
    height=1.25in,clip,keepaspectratio]{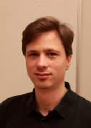}}]{Jon M. Hjelmervik}
    received his PhD in cotutelle between the University of Oslo and Grenoble INP in 2009.
    He is a research manager associated to SINTEF ICT Applied Mathematics since 1998.
    Until spring 2010 he held a 20\% position as associate professor at Norwegian School of Information Technology (NITH).
     His research interests include visualization of isogeometric representations and heterogeneous computing in cloud-based frameworks.  
\end{biography}
\vfill

\end{document}